\newtheorem{definition}{Definition}
\newtheorem{corollary}{Corollary}
\newtheorem{theorem}{Theorem}
\newtheorem{proposition}{Proposition}
\newcolumntype{G}{>{\collectcell\@gobble}c<{\endcollectcell}@{}}
\def\E{\mathbb{E}}
\newcommand{\iidsim}{\overset{iid}{\sim}} 
\newcommand{\di}{{\rm d}}
\newcommand{\bfe}{\textbf{e}}
\newcommand{\bfY}{\textbf{Y}}
\newcommand{\bfy}{\textbf{y}}
\newcommand{\bfr}{\textbf{r}}
\newcommand{\bftau}{\boldsymbol{\tau}}
\newcommand{\bfgamma}{\boldsymbol{\gamma}}
\newcommand{\bfpi}{\boldsymbol{\pi}}
\newcommand{\bfalpha}{\boldsymbol{\alpha}}
\renewcommand{\&}{and}
\algnewcommand\algorithmicinput{\textbf{INPUT:}}
\algnewcommand\INPUT{\item[\algorithmicinput]}
\algnewcommand\algorithmicoutput{\textbf{OUTPUT:}}
\algnewcommand\OUTPUT{\item[\algorithmicoutput]}
\DeclareMathOperator*{\argmin}{arg\,min}
\DeclareMathOperator*{\argmax}{arg\,max}
\theoremstyle{plain}
\newtheorem{assumption}{Assumption}
\newtheorem{lemma}{Lemma}
\begin{document}

\def\spacingset#1{\renewcommand{\baselinestretch}%
{#1}\small\normalsize} \spacingset{1}

%
%
%
%
%

\title{Bayesian variance change point detection with credible sets}
\author{Lorenzo Cappello, Oscar Hernan Madrid Padilla
	\thanks{L. Cappello is with the Department of Economics and Business, Universitat Pompeu Fabra, Barcelona, Spain, and the Data Science Center at Barcelona School of Economics, Barcelona, Spain. He acknowledges the support of Ayudas Fundacion BBVA a Proyectos de Investigacion Cientifica 2021, the Spanish Ministry of Economy and Competitiveness grant PID2022-138268NB-I00, financed by MCIN/AEI/10.13039/501100011033, FSE+MTM2015-67304-P, and FEDER, EU;  the grant Ramon y Cajal 2022 RYC2022-038467-I, financed by MCIN/AEI/10.13039/501100011033 and FSE+, and the Severo Ochoa Programme for Centres of Excellence in R\&D (Barcelona School of Economics CEX2019-000915-S), funded by MCIN/AEI/10.13039/50110001103}
	\thanks{O.H. Madrid Padilla is with the Department of Statististics \& Data Sciences, UCLA, Los Angeles, USA.  He acknowledges the support of National Science Foundation through the grant DMS-2015489. }
}

\markboth{IEEE TRANSACTIONS ON PATTERN ANALYSIS AND MACHINE INTELLIGENCE}%
{How to Use the IEEEtran \LaTeX \ Templates}

\maketitle

\begin{abstract}
	This paper introduces a novel Bayesian approach to detect changes in the variance of a Gaussian sequence model, focusing on quantifying the uncertainty in the change point locations and providing a scalable algorithm for inference. We do that by framing the problem as a product of multiple single changes in the scale parameter. We fit the model through an iterative procedure similar to what is done for additive models. The novelty is that each iteration returns a probability distribution on time instances, which captures the uncertainty in the change point location. Leveraging a recent result in the literature, we can show that our proposal is a variational approximation of the exact model posterior distribution. We study the convergence of the algorithm and the change point localization rate. Extensive experiments in simulation studies and applications to biological data illustrate the performance of our method. 
\end{abstract}


\section{Introduction}
\label{sec:intro}

The detection of change points -- when and how many times the distribution underlying an ordered data stream experiences a change -- is a field with a long history \cite{page1954continuous,barnard1959control}. The collection of large quantities of data enabled by new technologies -- \textit{e.g.}, wearable devices, telecommunications infrastructure, and genomic data -- has fostered a renaissance of the field. To analyze these data sets, we cannot assume relatively rigid structures where parameters are shared across all observations. Change points define partitions of the data where, within each segment, assumptions like exchangeability are not violated; hence, standard methods can be used.

The vast majority of the available methods output point estimates of the number of change points and the locations on these change points; a nonexhaustive list includes \cite{killick2012optimal,fryzlewicz2014wild,wang2018optimal, baranowski2019narrowest}.	An underdeveloped aspect of change point detection is uncertainty quantification, in the sense of being able to provide a set of times instances containing the location of the change at a prescribed level of significance as done by early works of \cite{worsley1986confidence,siegmund1986boundary}. A popular approach to construct confidence intervals builds on Yao \cite{yao1987approximating}, who derives the asymptotic distribution of the process of tests used to detect a single change in Gaussian mean. The result has then been extended to other settings, such as linear regression \cite{bai1998estimating}, piecewise constant signal plus noise \cite{eichinger2018mosum} and time series \cite{ling2016estimation}. However, the asymptotic distribution is not always known in closed form or may depend on unknown quantities. Recent attempts have addressed this gap taking a multiscale approach -- in the piecewise-constant mean case \cite{fric14,fan20}, and in piecewise-linear mean model \cite{fryzlewicz2020narrowest} -- or a post-selection inference one \cite{jewell2019testing}. \cite{fryzlewicz2020narrowest} is a good reference for the state-of-the-art in uncertainty quantification for change point detection location estimates.

Much of this recent literature has not covered the case where we are not interested in detecting a change in the mean of a sequence but changes in the underlying variance. There are methods returning point estimates for this task, 
such as the cumulative sum squares \cite{inclan1994use}, penalized weighted least squares methods \cite{chen1997testing,gao19}, the fused lasso \cite{padilla2022variance}, and PELT \cite{killick2012optimal}. However, none returns confidence or credible sets along with the point estimates. Here we introduce a simple and computationally scalable approach that helps address the issue.


The need to add a measure of uncertainty associated with changes in variance is motivated by experimental data presented by \cite{gao19}, who study a new technique to determine whether a liver is viable for transplant or not. Methods that are routinely employed involve a high degree of subjectivity (\textit{e.g.}, visual inspection by medical personnel) or invasive techniques (\textit{e.g.}, biopsy) with the risk of damaging the organ. The new procedure consists in monitoring surface temperature fluctuations of the organ at multiple locations using a temperature-controlled perfusion liquid. High-temperature fluctuations suggest a responsive, hence viable liver, and low variations indicate the loss of viability.  
A feature of the data that stands out is that the mean change smoothly. In \cite{gao19}, the authors propose a point estimator of a single change in variance that accounts for such smooth mean trend. We argue that, in such sensitive applications, a measure of uncertainty is as important as the ability to detect the change point. There are many other sensitive applications where such a feature is desirable, such as neuroscience and seismology.


Bayesian change point methods offer a natural way to quantify uncertainty \cite{chernoff1964estimating,smi75,bar92,bar93,carlin1992hierarchical} and have been linked to robustness to model misspecification \cite{lai2011simple,cap21}. Despite this obvious benefit, the Bayesian literature has not kept pace with recent advances in the literature, and practitioners do not commonly employ these methods. The main reasons are the high computational burden required and the limited literature on statistical guarantees available for these methods. In the applications considered, such limitations are critical. A lack of theoretical guarantees is not desirable in high-stakes settings. Furthermore, while the sample size at a point in the organ is relatively small, one must repeat the analysis at thousands of locations, one for each point where the temperature is monitored. Similar issues arise in other settings as well.

The high run times of Bayesian change point methods are primarily due to the Markov chain Monte Carlo (MCMC). There are a few computational speed-up, including closed-form recursions that exploit conjugate priors  \cite{fea06,lai2011simple}, Empirical Bayes approaches \cite{liu20}, and approximate recursions \cite{cap21}. However, despite the improvements, Bayesian change point methods remain orders of magnitude slower than state-of-the-art approaches, even for small sample sizes. 

A statistical property that researchers seek in a change point method is the localization rate. The literature on this topic for Bayesian methods is minimal, with few works dealing with optimality in a minimax sense \cite{liu20,cap21}. A second way of looking at statistical guarantees is the trustworthiness of the algorithms used for inference. As sample size and number of change points grow, one questions the feasibility of MCMC chains to explore the state space fully, especially when the number of change points is unknown. In simulations, Cappello et al. \cite{cap21} illustrate a standard Gaussian piecewise-constant mean scenario (BLOCKS, \cite{fryzlewicz2014wild}), where MCMC chains fail to converge lacking a ``good" initialization. Similar concerns were raised in the high-dimensional variable selection literature \cite{chen2019fast} and crossed random effects models \cite{gao2017efficient}.



In a way, our proposal targets both issues as we provide a Bayesian variance change point detection method that comes with theoretical guarantees, and gives inference in linear time without requiring MCMC. These features are essential for the applicability of our method in modern applications requiring detection of change points with uncertainty quantification.

\subsection{Our contributions and related work}

We start by considering a simple setting with an ordered sequence of $T$ independent Gaussian random variables with constant mean undergoing $K$ changes in variance. In the presence of a single change point ($K=1$), one can construct a Bayesian model with conjugate priors using a latent random variable indexing the unknown location of the change point \cite{smi75,raftery1986bayesian}. Such a model inherently describes uncertainty on the change point locations through the posterior distribution of the latent variable. The computational cost is minimal, being the update of posterior parameters.

The generalization to multiple change points inflates the computational costs because closed-form updates are unavailable, and the posterior distribution needs to be approximated. The computational advantages described for the single change point (or single effect) model are essentially lost, except for the possibility of writing Gibbs sampler full conditionals. 
Ideas from additive models and variable selection suggest that it is possible to preserve the advantages of single-effect models if one ``stacks" multiple single-effect models and solves them recursively, one at a time. \cite{has00} provide the first link between additive models and posterior approximation. Recently, \cite{wang2020simple} employ such an approach in variable selection for the linear model and suggest it can be used to detect changes of a piecewise constant Gaussian mean. They also show that this recursive algorithm is essentially a variational approximation to the actual posterior distribution.

Our work builds on these ideas. However, an additive structure is unsuitable for variance parameter changes; thus, a different construction is necessary. The building block of our proposal is a single change point model with a random change point location and a random scale parameter that multiplies a baseline variance. \textit{I.e.}, we have a nested structure where the variance to the left of the change point is ``scaled" to define the variance to the right. Such a construction is essential to generalize the model to multiple change points. In the paper, we study the theoretical properties of the single change point model and show that it attains consistency for the task of  localization of the changes in variance \cite{wan21}. The proof holds both in the $i.i.d.$ setting, when there is serial dependence, and does not rely on the Gaussian assumption. To our knowledge, it is the first proof of a Bayesian estimator finite sample guarantees for changes in variance.

To extend the single effect model to the multiple change point scenarios, we consider multiple independent replicates of the model described above, \textit{i.e.}, multiple latent indicators describing change point locations and multiple scale parameters. Rather than summing multiple single effects as in \cite{wang2020simple}, we take a product of the single effects; in practice, it corresponds to taking a product of scale parameters. Such construction is new and it is essential to develop a scalable algorithm. We then propose a recursive algorithm to fit this product of models similar to those used for additive models. The algorithm has a linear computational time in the number of change points. Building on the intuition in \cite{wang2020simple}, we can prove that our algorithm corresponds to a Variational Bayes (VB) approximation to the actual model posterior. Such characterization has the advantage of linking our proposal to existing algorithms for posterior approximation and allows us to prove the algorithm's convergence to a stationary point. However, the characterization is just a nice byproduct of our novel modular construction. If one were to do a VB approximation of existing Bayesian change point detection methods (\textit{e.g.}, \cite{bar92}), one would get a different algorithm.

The approximate posterior distribution naturally allows us to obtain point estimates for the location of the change points and to construct credible sets describing the uncertainty underlying these estimates at a prescribed level. These sets are a discrete set of time instances chosen through their posterior probability and are not necessarily intervals, as opposed to recent works \cite{fric14,fryzlewicz2020narrowest}. VB posterior approximations are known to provide excellent point estimates but to underestimate uncertainty \cite{bishop2006pattern}. In simulations, we show that our method offers point estimates as accurate as state-of-the-art methodologies (at times even more accurate), and the uncertainty underestimation typical of VB is not extremely severe. Overall, at little additional computational costs, our proposal provides points estimates as precise as those of competitors and a measure of uncertainty that competitors lack.

Our proposal is modular and can be easily generalized to more complex and realistic data-generating mechanisms. As proof, we show how to extend our base method to variance change point detection in the presence of autoregression, a smoothly varying mean trend (\textit{e.g.},  \cite{gao19}), and non-Gaussian setting. We also show that it can accommodate a situation where multiple observations are available per time point, which is relevant when data are binned (\textit{e.g.}, \cite{cap21}) or when we have cyclical data (\textit{e.g.}, \cite{ush22}). To the best of our knowledge, our proposals are the first to tackle some of these tasks. These extensions are a proof-of-concept of our approach's generalizability. The paper includes a description and empirical study of each, and future work will study their properties.


To summarize, the paper includes the following contributions. (a) We study the theoretical properties of a Bayesian variance single change point estimator and establish that it is consistent in settings with dependent and non-Gaussian data. (b) We propose a new methodology for Bayesian variance change point detection when there are multiple change points. The method is fast, empirically accurate, and allows the construction of credible sets describing the uncertainty of change point location. (c) We justify the algorithm used to fit our methodology and establish its convergence. (d) We show that we can generalize our proposals to realistic settings, such as autoregression, smoothly varying mean, misspecified model and repeated measurements. (e) We study the new experimental technique to assess liver procurement and provide a measure of uncertainty.

The rest of the paper proceeds as follows. Section~\ref{sec:single} introduces the single change point model and study its theoretical properties. In Section~\ref{sec:prisca}, we extend the single change point model to multiple changes, introduce an algorithm to approximate the posterior distribution and establish the theoretical underpinnings of the algorithm. After having introduced the baseline model, Section~\ref{sec:ext} outlines several extensions. Section~\ref{sec:sim} details a simulation study comparing our proposal's performance vis-a-vis alternatives. In Section~\ref{sec:liver}, we analyze the liver procurement data of \cite{gao19}. In Section~\ref{sec:oce}, we include a second application to new oceanographic data. Section~\ref{sec:disc} concludes. An implementation of the proposed method and the code to reproduce the numerical experiments are available as a  \texttt{R} package for download at \url{https://github.com/lorenzocapp/prisca}.

\section{Model for a single change in variance}\label{sec:single}

Assume we observe a vector of independent Gaussian random variables such that there is a time instance $t_0$ that partitions the vector into two segments: for $t<t_0$, $Y_t |\sigma_l^2 \iidsim N(0,\sigma_l^2)$,  for $t \geq t_0$, $Y_{t} |\sigma_r^2 \iidsim N(0,\sigma^2_r)$. We are primarily interested in an estimate of the unknown location $t_0$ and a measure of the uncertainty of such an estimate. The parameters $\sigma_l^2$ and $\sigma_r^2$ could be known or unknown. We consider without loss of generality (w.l.o.g.) a setting where $\sigma_l^2$ is known, and $\sigma_r^2$ is unknown. The extension to multiple change points will circumvent this assumption. A natural model for this question is the following: 

\begin{align} \label{eq:single}
	\bfY | \bftau, \sigma^2 ,\bfgamma &\sim   \bftau^{-1} \circ  \bfe \,\,\,\,\, \text{with} \,\, \bfe \sim N_T(0,\sigma^2 I_T), \nonumber \\
	\bfgamma|\bfpi &\sim  \text{Multinomial(1,}\bfpi),\\
	\tau_t| \bfgamma  \,&=\,\begin{cases}
		1 &  \text{if} \,\,\,\, 1\leq t < t_1 ,\\
		s & \text{if} \,\,\,\, t_1\leq t \leq T, 
	\end{cases} \nonumber\\
	s^2 | a_0 &\sim \text{Gamma}(a_0, a_0), \nonumber
\end{align}
where, $\bfgamma$ denotes a Multinomial random vector of length $T$, with an event corresponding to each time instance, and $t_1$ denotes the time instance sampled, $\tau_t$ an entry of the $T$-length vector $\bftau$, $\circ$ the Hadamard product, and with abuse of notation $\mathbf{x}^{k}=(x_t^{k})_{1:T}$ denotes the elementwise power of vector entries.

The random vector $\bfgamma$ describes the unknown location of the change point. The parameter $\bfpi$ of the Multinomial distribution describes the prior probability of having a change in variance at any given instance $t$. The Multinomial vector has a single trial since we are considering the single change point case. The model represents $(\sigma_l^2,\sigma_r^2)$ through a baseline variance $\sigma^2$ (assumed known) which gets scaled by  $s$, such that to the right of $t_1$, the $Y_i$'s are Gaussian distributed with variance $s^{-2}\sigma^2$. To the left of $t_1$, we have a neutral model with $\tau_t=1$ and variance equal to $\sigma^2$. 
The Gamma distribution has equal shape and rate parameters to have a priori expectation equal to one, which roughly corresponds to a null model, and it is convenient to have one less parameter to tune.


The distributions in \eqref{eq:single} are conjugate, so that posterior distribution $P(\bfgamma, \bftau| \bfy)$ (for parsimony we omit the hyperparameters) is available in closed-form:
\begin{align}\label{eq:posterior}
	\bfgamma|\bfy,\bfpi, \sigma^2,a_0 &\sim  \text{Multinomial}(1,\bfalpha),\\
	s^2 | \bfy,\bfpi, \sigma^2,\gamma_t=1 &\sim \text{Gamma}(a_t,b_t), \nonumber
\end{align}
where $\bfalpha$ is a vector with entries
\begin{equation}\label{eq:postalpha}
	\alpha_t:=P(\gamma_t=1 | \bfy,\bfpi, \sigma^2,a_0)= \frac{P(\bfy | \gamma_t=1,\bfpi, \sigma^2,a_0) \pi_t}{\sum_{\pi_j} P(\bfy | \gamma_j=1,\bfpi, \sigma^2,a_0) \pi_j},
\end{equation}
and $P(\bfy | \gamma_t=1,\bfpi, \sigma^2,a_0)=\int P(\bfy | \gamma_t=1,\bfpi, \sigma^2,s) \di P(s|a_0)$. The posterior hyperparameters of the Gamma distribution are for $t=1,\ldots,T$
\begin{equation}\label{eq:posttau}
	a_t =a_0 + \frac{T-t+1}{2} \,\,\,\, \text{and} \,\,\,\, b_t=a_0+ \frac{\bfy_{t:T}^T \bfy_{t:T}}{2 \sigma^2}.
\end{equation}
All quantities in \eqref{eq:posterior}-\eqref{eq:posttau} are available in closed-form (no numerical approximation required; see Supplementary Material S1 for the derivations). If one works with a prior on $s$ different than Gamma, closed-form posterior quantities are not available and numerical approximations are necessary; see Section~\ref{sec:ext} for details.

What makes model \eqref{eq:single} appealing is the vector of probabilities $\bfalpha$ can be used simultaneously for point estimation and uncertainty quantification. This differs from recent proposals where the two tasks are disentangled. 
An obvious point estimate is the maximum a posterior
\begin{equation}\label{eq:point}
	\widehat{t}= \argmax_{t \in T} \alpha_t.
\end{equation}
The next subsection studies the properties of such point estimates. In addition, we can return a set of time instances containing the true change point with a prescribed probability level, \textit{i.e.} a credible set\textit{}. 
The obvious way to construct the credible set is to rank $\alpha_t$ in decreasing order and choose the smallest number of time instances such that the sum of the posterior probabilities is bigger than a prescribed probability $p$. Thus, the resulting set is
\begin{equation}\label{eq:cs}
	\mathcal{CS}(\bfalpha, p):=\arg \min_{S \subset \{1,\ldots,T\}:\sum_{t\in S} \alpha_t>p}  | S |.
\end{equation}
We highlight that the credible set built through the above is not necessarily an interval. Figure~\ref{fig:example1} column (A) depicts an example of what was discussed.


\begin{figure}
	\centering
	\hspace{-0.2cm}\includegraphics[scale=0.7]{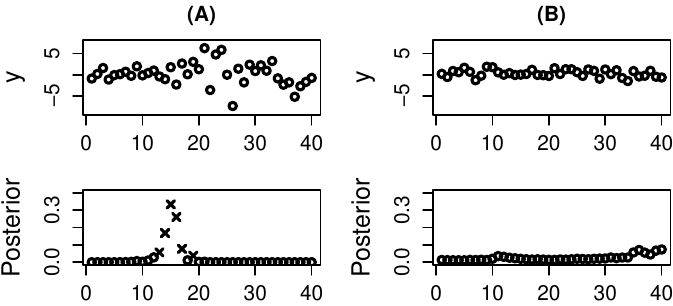}
	\caption{\small{\textbf{Example of single (A) and no (B) change point.} (A) top left panel depicts a sample of $T=40$ Gaussian random variables, with $t_0=16$, $\sigma_l=1$ and $\sigma^2_r=3$. The bottom panel depicts the posterior distribution of $\bfgamma$, \textit{i.e.}, $\bfalpha$. Cross dots depict $\mathcal{CS}(\bfalpha,.9)$, $\widehat{t}=15$. (B) depicts an example with no change point: top panel $T=40$ samples from a zero mean, variance one Gaussian; bottom panel as in (A)}}
	\label{fig:example1}
\end{figure}

We also have a closed expression for the posterior expectation of $\bftau$, which provides an estimates of the variance at all time instances. Given $\bfalpha$, we can write $\overline{\bftau}^2:=E[\bftau^2|\bfy]$, where
\begin{align}\label{eq:expectau}
	\overline{\bftau}^2=&(\alpha_1 \hat{s}_1^2 + 1-\alpha_1,  \alpha_1 \hat{s}_1^2+ \alpha_2 \hat{s}_2^2 + 1-\alpha_1-\alpha_2,\ldots \\
	& \ldots,\sum_{i=1}^{t} \alpha_i \hat{s}_i^2 + 1 -\sum_{i=1}^{t} \alpha_i,\ldots, \sum_{i=1}^T \alpha_i \hat{s}_i^2), \nonumber
\end{align}
with $\hat{s}_t^2:=\E[s|\bfy,\gamma_t=1]=a_t/b_t$, the posterior expected precision when fitting a model conditionally on $\gamma_t=1$. The $t$th entry of $\overline{\bftau}^2$ is a weighted average of all the models that can be assigned at time instance $t$. For example, the first time instance has the ``neutral model" ($\tau_1$) always assigned except when $\gamma_1=1$; in $t_2$, $\tau_2$ is different from one only when $\gamma_1=1$ or $\gamma_2=1$; in $t_T$, the posterior expectation is the weighted average of the $T$ posteriors expectations $(\hat{s}_t^2)_{1:T}$. We refer to the Supplementary Material for closed-form calculations.

The model described here is closely related to a general approach for Bayesian detection of a single change point described by Smith \cite{smi75}. A difference is that Smith models the variance to the left and right of $t_0$ as separate random variables. Here, we have a nested structure, with $s$ tilting a baseline (known) variance $\sigma^2$ to describe $\sigma_r^2$. This feature is useful in the generalization to multiple change points. 

Throughout the section, the model assumes the existence of a change in the variance. This is standard in change point detection. However, it is natural to wonder what the posterior of $\bfgamma$ looks like if there is no such change. Our formulation includes a realistic description of the null model at any time instance: lacking evidence of a change, the posterior $\tau_t$ will concentrate at one for all $1\leq t \leq T$. A consequence of such construction is that we empirically observe $\bfalpha$ is somewhat diffuse when there is no change point; see, for example, Figure~\ref{fig:example1} column (B). This is not the case in the traditional Bayesian setup, see \cite{smi75}. Such observation will be crucial in discussing the extension to multiple change points.

\subsection{Theory}
\label{sec:theory}

We consider the change point selection criterion defined in \eqref{eq:point}. We study the localization rate of this point estimator. The main result is based on the following assumption which requires the definition of $\alpha$-mixing given in Definition S1 in the Supplementary Material.

\begin{assumption} \label{ass1}
	Let $t_0$ be the time instance such that $Y_{t} \sim  F_1$ for $t \geq t_0$ and $Y_{t} \sim F_0$ for $t < t_0$, and let $\tau^2=\sigma^2_l/\sigma^2_r$ where  $\sigma^2_l:= \mathbb{E}(Y_1^2)$ and $\sigma^2_r := \mathbb{E}(Y_{t_0+1}^2) $.  We assume that $\{Y_t, t \in  \mathbb{N}\}$ is $\alpha $-mixing with coefficients satisfying $\alpha_k \,\leq \, e^{-C k }$ for some positive constant $C>0$.  Also, we require that   $\max\{ \mathbb{E}(| Y_1^2- \sigma_l^2 |^{2+\delta  }),  \mathbb{E}(| Y_{t_0+1}^2-  \sigma_r^2|^{2+\delta  }) \} \,\leq \,D_1$ for some positive constants $\delta,D_1$. In addition assume that: 
	\begin{enumerate}[a.]
		\item There exists a constant $c>0$ such that $\min\{t_0,T-t_0\}>cT$.
		\item For some  fixed intervals $I_1 \subset (1,\infty)$  and $I_2 \subset (0,1)$   we have  that  $\tau^2 \in I_1\cup I_2$.  
		\item The hyperparameters are $a_0>0$ and $\bfpi$ satisfies that $\sum_t \pi_t=1$, and  $ c_{\min}/T \leq \pi_t \leq c_{\max}/T$ for all $t =1,\ldots,T$, and some positive constants $c_{\min}$ and $c_{\max}$.
	\end{enumerate}
\end{assumption}


Assumption \ref{ass1} allows the data to be dependent by relaxing the usual i.i.d. condition with an $\alpha$-mixing condition. Moreover, we do  not require the data to be Gaussian, instead we only require a moment condition on the measurements. Furthermore,  Assumption~\ref{ass1} \textit{a.} has appeared in the literature, see for instance Theorem 2 in \cite{cap21}. Assumption~\ref{ass1} \textit{b.} requires the true scaling of the variance to be non-negligible: $1$ is excluded from $I_1$ and $I_2$ but it can be close to them. Assumption~\ref{ass1} \textit{c.} requires that the priors are proper. We are ready to state the main result of the section.

\begin{theorem}\label{thm:single}
	Suppose that Assumption~\ref{ass1} holds. Then, for $\epsilon>0$ there exists a constant $c_1>0$ such that, with probability approaching one, we have that
	\begin{equation}
		\label{eqn:conclusion}
		\max_{t:\min\{t,T-t\}>cT, |t-t_0|>c_1 \sqrt{T }\log^{1+\epsilon} T }\alpha_t < \alpha_{t_0}.
	\end{equation}
	If in addition, we assume that $Y_{t} \iidsim N(0,\sigma_r^2)$ for $t \geq t_0$ and $Y_{t} \iidsim N(0,\sigma_l^2)$ for $t < t_0$, then with probability approaching one, we have that
	$$ \max_{t:\min\{t,T-t\}>cT, |t-t_0|>c_1 \sqrt{T \log^{1+\epsilon} T }}\alpha_t < \alpha_{t_0}.$$
\end{theorem}
Notably, the first part of Theorem~\ref{thm:single} establishes that the estimator~\eqref{eq:point}, constructed based on the model~\eqref{eq:single}, achieves a localization rate of order $\sqrt{T} \log^{1+\epsilon}T$, even in scenarios where the data can exhibit dependencies and non-Gaussian distributions. The second part of Theorem~\ref{thm:single} then states that if the data is independent and Gaussian, the rate becomes $\sqrt{T \log^{1+\epsilon}T}$. It is worth highlighting that this rate matches the one achieved by the Binary Segmentation method introduced by \cite{wan21}. Consequently, in the single change point setting, our algorithm showcases consistency in more general settings compared to Binary Segmentation, and it achieves an identical rate as Binary Segmentation in the case of independent Gaussian data. However, in the scenario of independent Gaussian data, it remains unknown whether our estimator is minimax optimal as the Wild Binary Segmentation algorithm from \cite{wan21}.




Moreover, we highlight that our algorithm's credible sets are valid in the Bayesian sense. 
Also, an immediate consequence of Theorem \ref{thm:single} is that, with high probability, a point within a distance $\sqrt{T }\log^{1+\epsilon}T$ of the true change point will be contained in the credible set. 

Finally, we conclude with an extension of Theorem \ref{thm:single} to the case of smoothly varying mean.

\begin{corollary}
	\label{cor2}
	Suppose that $\{Y_t\}$  satisfies Assumption \ref{ass1} with $\mathbb{E}(Y_t) = 0$  for all $t$. However, assume that instead of observing  $\{Y_t\}$ we observe $\{Z_t\}$ where $Z_t = \mu_t + Y_t$ for an unknown deterministic sequence $\{\mu_t\}$.   Suppose that $\{\hat{\mu}_t\}$ is an estimator of $\{ \mu_t\}$  independent of  $\{ Y_t\}$. In addition suppose that for a constant $c>0$ it holds that 
	\[
	\sum_{t=1}^T ( \mu_t -\hat{\mu}_t )^2  \,\leq\, c\sqrt{T} \log T , \,\,\,\,\, \underset{t=1,\ldots,T}{\max} \, \vert \hat{\mu}_t\vert \,\leq\,  c 
	\]
	with high probability.  Then, using $\{\hat{Y}_t\}$, with $\hat{Y}_t = Z_t - \hat{\mu}_t $,  as the input data for our method implies that (\ref{eqn:conclusion}) holds with high probability. 
\end{corollary}

Thus,  Corollary \ref{cor2}  allows for changing mean in the model with the main requirement that the changing mean can be estimated at a faster rate, in terms of mean squared error,  than $ T^{-1/2}\log T$, which is the case for nonparametric  estimators such as trend filtering \cite{tib14,madrid2024change}.

Regarding the independence assumption between $\{\hat{\mu}_t\}$ and $\{ Y_t\}$, consider first the scenario where the measurements ${Y_t}$ are independent. In this case, we can partition the data into two subsets based on odd and even time points $t$. One subset can be renamed $\{Y_t\}$, while the other can be used to estimate the mean $\{\hat{\mu}_t\}$, using any nonparametric method. This ensures that $\{\hat{\mu}_t\}$ is independent of $\{Y_t\}$, with $\hat{\mu}_t$ serving as an estimate of $\mu_t$. Here, the index $t$ can be treated as a fixed covariate corresponding to $t/T$.

In the case where $\{Y_t\}$ are not independent, the following approach can be considered. If a second independent dataset is available, it can be used to estimate $\{\hat{\mu}_t\}$ independently of ${Y_t}$. If no second dataset is available, we can split the original data into two subsets: $\{Y_{2ih}\}$ and $\{ Y_{(2i+1)h}\}$, where $h$ is a positive integer. When the original data exhibit weak dependence and $h$ is sufficiently large, this partitioning results in two approximately independent datasets.

Finally, since the estimator proposed in this paper is novel, the analysis done in Theorem 1 and Corollary 1 is not based on existing works. In particular, the proof in the Supplementary Material works explicitly with the statistics associated with our method, performing  multiple careful Taylor's series expansions, and using concentration results for $\alpha$-mixing sequences, see Lemma 1 in the Supplementary Material.

\section{Product of single change point models}\label{sec:prisca}

\subsection{Multiple change points in variance}

Let $\mathcal{C}=\{t_1^*,\ldots,t_K^*\}$ a set of $K$ times instances that partitions a sequence of random variables in $K+1$ segments and let $\sigma^2_1,\ldots,\sigma_{K+1}^2$ denote the variance within each segment. Assume we observe a Gaussian sequence of random variables such that $Y_t \iidsim N(0,\sigma_{i+1}^2)$ for all $t^*_i \leq t < t^*_{i+1}$ and $0\leq i \leq K$, where $t_0^*=1$ and $t_{K+1}^* = T$.  We introduce an approach to detect multiple changes in variance and construct credible sets paired with each change point detected.  

Model \eqref{eq:single} performs this task when $K=1$. An obvious generalization is to have $\bfgamma$ sample more than one point, leading to a vector $\bfgamma$ with multiple distinct non-zero components. This can be done with a multinomial with the same parameters $\bfpi$ and the number of experiments equal to the number of change points. Being the latter unknown, one could either place a prior distribution on it or do model selection using some information criteria. Such a model induces a simple distribution on partitions. Refinements of this rationale, such as assigning higher prior probability to partitions having a given minimum spacing between change points have been developed; \textit{e.g.}, product partition models (PPM) \cite{bar92,bar93}. 

Despite the existence of efficient sampling-based algorithms for PPMs \cite{fea06}, the tractability of the single-change point model is lost and approximating the posterior is costly. 
While it is generally the case that a more complex Bayesian method leads to a higher computational burden, in the linear model literature, \cite{wang2020simple} proposed a model that largely preserves the tractability of single-effect models despite the introduction of multiple effects. The idea is reminiscent of additive models, where multiple single effects models are ``summed together". 


Our proposal borrows this intuition and employs the single change point model \eqref{eq:single} as the building block for the multiple change point extension. We do not have an additive structure because we deal with a scale parameter. Instead of summing, our idea is to  ``stack"  multiple single-effect models and multiply them. We call our proposal PRoduct of Individual SCAle parameters (PRISCA). More formally, let $L$ be an upper bound to the number of possible change points (more discussion  on choosing $L$ later); our Bayesian model can be written as

\begin{align} \label{eq:prisca}
	\bfY | \bftau, \sigma^2 ,\bfgamma, \bftau &\sim  \bftau^{-1} \circ  \bfe \,\,\,\,\, \text{with} \\
	\bfe \sim N_T(0,\sigma^2 I_T) \,\,  &\text{and}\,\,  \bftau= \prod_{l=1}^L \bftau_{l}^{-1}, \nonumber \\
	\bfgamma_l &\sim  \text{Multinomial}(1, \bfpi),\\
	\tau_{t,l}| \bfgamma_l  \,&:=\,\begin{cases}
		1 &  \text{if} \,\,\,\, 1\leq t < t_l ,\\
		s_l & \text{if} \,\,\,\, t_l\leq t \leq T. 
	\end{cases} \nonumber \\
	s_l^2 | a_0 &\sim \text{Gamma}(a_0, a_0), \nonumber
\end{align}
for $ l=1,\ldots,L$, where $\prod_{l=1}^L \bftau_{l}^{-1}$ stands for the element-wise product  of the $L$ vectors,  $t_l$ the index where $\bfgamma_l$ takes value one, 	$\tau_{t,l}$ is one of the entries of the vector $\bftau_l$. As in the single change point case, we describe the conjugate case here, but everything will apply to non-conjugate priors on $s$; see Section~\ref{sec:ext}.

Model \eqref{eq:prisca} represents $\sigma^2_1,\ldots,\sigma_{K+1}^2$ via a baseline variance $\sigma^2$ that gets progressively scaled by $s_1,\ldots,s_L$ (any ordering is possible). For example, let $(\sigma_1^2,\sigma_2^2,\sigma_3^2)$ be the variance of three consecutive segments, PRISCA describes this as $(\sigma^2,s_1^{-2}\sigma^2,s_1^{-2} s_2^{-2}\sigma^2)$ with $s_1^2=\sigma^2_1/\sigma^2_2$ and $s_2^2=\sigma^2_2/\sigma^2_3$. While this may appear convoluted, the benefit of such an approach will be evident when describing the algorithm we use to fit the model.  The assumption that the baseline variance is known (\textit{e.g.}, equal to one) is not a limitation because the method estimates a change point at the start of the sequence.

For $L=1$, the model reduces to the single change point model. This suggests that much of what was discussed in the previous section holds for PRISCA. Each pair $(\bfgamma_l,\bftau_l)$ can model only up to one change point, meaning that, given the posterior distribution of   $\bfgamma_l$, one can construct a point estimate and a credible set through \eqref{eq:point}-\eqref{eq:cs}. Similarly, the posterior expectation $\overline{\bftau_l}^2:=E[\bftau_l^2|\bfy]$ is given by \eqref{eq:expectau}. Since each $\bfgamma_l$ models a single change, $L$ must be larger than or equal to $K$ to be able to detect all the change points. Given the posterior distribution $p(\bftau^2_{1:L},\bfgamma_{1:L}|\bfy,\sigma^2)$, our approach can output multiple point estimates and credible sets, each constructed using the marginal $p(\bftau^2_{l},\bfgamma_{l}|\bfy,\sigma^2)$ for $l=1,\ldots,L$. Setting $L>K$ is not an issue, since we observe empirically that the posterior parameters of the redundant vectors $\bfalpha_l$ are somewhat diffuse and do not concentrate posterior mass on any particular time instance, suggesting that no additional change point is detected; see simulation study and Figure~\ref{fig:example1} column (B).  

Model~\eqref{eq:prisca} is fully symmetric in the $L$ components, a feature that is crucial for the development of the algorithm used for inference. Also note that, as opposed to PPM, model~\eqref{eq:prisca} does not enforce any minimum spacing between change points,  and no mechanism prevents the same time instance to be selected twice; \textit{i.e.}, $p(\bftau^2_{l},\bfgamma_{l}|\bfy,\sigma^2)$ and $p(\bftau^2_{l'},\bfgamma_{l'}|\bfy,\sigma^2)$ concentrates around the same time instance for $l\neq l'$. Again, such choice is motivated by the need to develop a fast algorithm for inference. We will see that such choices do not undermine the performance of the algorithms while they allow our algorithm to be magnitude of orders faster than those used for PPM. We will return to this issue in the next subsection.


\begin{figure}
	\centering
	\includegraphics[width=.45\textwidth]{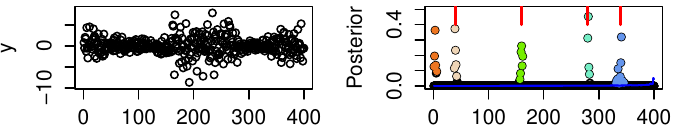}
	\caption{\small{\textbf{Example multiple change points.} We simulate $T=400$ independent Gaussian random variables with four changes in variance ($K=4$, red segments depict true change points locations) and mean zero (left panel). We fit PRISCA with $L=8$ and $a_0=.001$ using Algorithm~\ref{alg:prisca_fit}.  Figure right panel depicts the vectors $\bfalpha_l$ for $l=1$ to $8$. The colored dots depict the five $\mathcal{CS}(.9)$ constructed. The remaining three $\bfalpha$ are too diffuse (depicted with the solid blue lines)}. }
	\label{fig:example_multi}
\end{figure}

The posterior distribution of PRISCA is not available in a tractable form, and we will approximate it with the algorithm discussed next. Now, we illustrate the discussion above with an example. We simulate a sequence of Gaussian random variables with four changes in variance, mean zero, and $\sigma^2=2$ (details in Figure~\ref{fig:example_multi} caption), and approximate the posterior distribution setting $L=8$. Figure~\ref{fig:example_multi} depicts the eight vectors $\bfalpha_{1:8}$ (one of top each other); the approximate posterior distributions $[p(\bfgamma_{l}|\bfy)]_{1:8}$. We colored five separate $90\%$ credible sets that have been constructed for the vector $\bfalpha_l$ that ``concentrate" around a time instance (more details in the next subsection). Note that while $K=4$, PRISCA detected an extra time change at the beginning of the sequence. The reason is that our implementation assumes that the baseline variance $\sigma^2$ is equal to one. Since this is not the case in this example, the algorithm correctly inferred a change at the beginning of the sequence. Three vectors were redundant because there was no remaining effect to capture. We observe here that they ``do not concentrate" around any instance, \textit{i.e.}, they are diffuse (full posterior depicted with the solid blue lines).


In this section, we presented the most parsimonious version of PRISCA, where the same hyperparameters are shared across components. A more general description of the model entails hyperparameters specific to each component, \textit{i.e.}, $(a_{0,l})_{1:L}$ and $(\pi_{l})_{1:L}$. The benefit of the current description is that parameter tuning is minimal, with $L$ the only truly relevant parameter; the next subsection describes a heuristic to choose it. In Section~\ref{sec:sim_robu}, we study the robustness of the algorithm with respect to the choice of $a_0$.

\subsection{How to fit PRISCA}
\label{sec:prisca_algo}

A Gibbs sampler for PRISCA is readily available because the prior distributions in \eqref{eq:prisca} are conditionally conjugate. However, such a chain is poised to mix very slowly because the random vectors $\bfgamma_l$ are highly dependent. The deployment of MCMC in Bayesian change point detection commonly suffers from this problem, and lacking a very good initialization (\textit{e.g.}, the output of another change point detection procedure), chains often fail to converge; see \cite{cap21}. 

The conditionally conjugate structure that makes writing a Gibbs sampler possible suggests a link to techniques used to solve additive models, as first noted by \cite{has00} and more recently by \cite{wang2020simple}. By construction, if one somehow removes the randomness of $(\gamma_{l'},\tau_{l'})_{l'\neq l}$, the posterior update of $(\gamma_l,\tau_l)$ is the same as in the single change point problem. In a Gibbs sampler, the update relies on conditioning on the latest parameters' update; in an additive model, the update relies on being able to compute residuals using previous iterations. In our setting, if we had access to $(\overline{\tau}^2_{l'})_{l'\neq l}$, the squared residuals of model  \eqref{eq:prisca} would be $\overline{r}_l^2= \bfy^2 \circ \prod_{l'\neq l} \overline{\tau}_l^2$, and one would obtain the posterior distribution $\mathrm{pr}(\gamma_l,\tau_l |\overline{r}_l)$ simply through \eqref{eq:postalpha}-\eqref{eq:posttau}. Clearly, $p(\gamma_l,\tau_l |\overline{r}_l)$ is not $p(\gamma_l,\tau_l |\bfy)$ if  $L$  is bigger than one but it is at least a reasonable approximation, and it saves us a lot of computations because it bypasses  the intractability of $p(\gamma_l,\tau_l | \bfy)$. We will see that it is also a good approximation. The underlying reasoning is that for $L=1$ the update is exact; \textit{i.e.} no approximation is necessary (in Section~\ref{sec:sim_robu}, we study this statement numerically). Quite essentially, $p(\gamma_l,\tau_l |\overline{r}_l)$ allows for approximating $\overline{\tau}^2_{l}$ through \eqref{eq:expectau}, which in turns can be used to compute residuals for other effects. This reasoning suggests an iterative algorithm to approximate the posterior of PRISCA.

Algorithm~\ref{alg:prisca_fit} describes the recursion. It requires an initialization for $(\overline{\bftau}^2_{l})_{1:L}$ because one does not have access to them. For simplicity, we set them equal to vectors of ones, \textit{i.e.}, null effects. Then iteratively, the posterior distribution of each effect $(\bftau_l, \bfgamma_l)$ is computed using the single change point updates \eqref{eq:postalpha}-\eqref{eq:posttau}, with the difference that one uses  residuals $\overline{\bfr}_l$ in lieu of $\bfy$. Since we are not considering an additive model, we will not obtain the residuals by subtracting the expectation of each effect (except the one we are updating); instead, we scale $\bfy$ using $(p(\bfgamma_i,\bftau_i |\overline{\bfr}_i))_{i\neq l}$. The rest proceeds in the same fashion as additive models, employing backfitting \cite{fri81,bre85} to reduce the dependency on the order we fit each effect. In this respect, the symmetry in the model construction is essential as we revisit the estimates made by the algorithm in the early steps. Any rule to stop the recursion can be used; for example, stop whenever the parameters' change is less than a threshold. In the next subsection, we will show that there is a specific objective function called $ELBO$ that PRISCA maximizes. Algorithm~\ref{alg:prisca_fit} already includes such stopping rule. A difference with additive model is that every update does not give a new parameter estimate, rather a new distribution $p(\bfgamma_l,\bftau_l |\overline{\bfr}_l)$, fully defined by the vectors of probabilities $\bfalpha_l$ and parameters $(a_{t,l},b_{t,l})_{t=1:T}$.


\begin{algorithm}[!b]
	\caption{Fit of PRISCA}
	\label{alg:prisca_fit}
	\begin{algorithmic}
		
		\State \textbf{Inputs:} \bfy, L, $a_0$, $\epsilon$
		\State \textbf{Output:} $(\bfalpha_l)_{1:L}$, $[a_{t,l},b_{t,l}]_{l=1:L,t=1:T}$
		
		\begin{enumerate}
			\item Initialize all entries of  $(\overline{\bftau}_l^2)_{1:L}$ equal to one and compute $ELBO$ via \eqref{eq:elbodecomposed_q_l}
			\item Repeat until convergence 
			
			For $l$ in $1$ to $L$
			
			\begin{itemize}
				\item $\overline{\bfr}_l^2= \bfy^2 \circ \prod_{l'\neq l} \overline{\bftau}_{l'}^2$ 
				\item Update
				$[a_{t,l},b_{t,l}]_{t=1:T}$ via  \eqref{eq:posttau} using $\overline{\bfr}_l^2$ in lieu of  $\bfy^2$
				\item Update $\bfalpha_l$ via \eqref{eq:postalpha} (to compute $P(\bfy | \gamma_t=1,\bfpi, \sigma^2,a_0)$ use (S1) in  the Supplementary Material)
				\item Update  $\overline{\bftau}_l$ via \eqref{eq:expectau}
				\item Compute $ELBO'$ via \eqref{eq:elbodecomposed_q_l}
				\item Exit if $| ELBO'-ELBO| < \epsilon$ is true. If false, set $ELBO=ELBO'$
			\end{itemize}
			
		\end{enumerate}
	\end{algorithmic}
\end{algorithm}

Algorithm~\ref{alg:prisca_fit} outputs $L$ vectors $\bfalpha_l$ and the matrix $[a_{t,l},b_{t,l}]_{l=1:L,t=1:T}$. Each vector can be processed independently to give a point estimate and a credible set. When $L>K$, there are $L-K$ redundant vectors that should not capture any effect, and we expect the corresponding vector $\bfalpha$ to be fairly flat. No matter what $\bfalpha$ looks like, criteria \eqref{eq:point}-\eqref{eq:cs} will return a point estimate and a credible set, even with $\bfalpha$ is as in Figure~\ref{fig:example1} (B) bottom panel. Such behavior is not desirable. We employ a detection threshold to prevent that, classifying a change point as detected if its corresponding credible set is of cardinality less or equal to half of the sequence length. Thus,

\begin{equation}\label{eq:khat}
	\hat{K}=|\{	\mathcal{CS}(\bfalpha_l, p): |\mathcal{CS}(\bfalpha_l, p)|\leq T/2 \}|.
\end{equation}
The threshold $T/2$ is somewhat arbitrary, but our results tend to be entirely insensitive for this threshold as $\bfalpha$ is either very diffuse or very concentrated. 

One could wonder whether an alternative to the $L-K$ redundant vectors being flats is two or more $\alpha_l$ concentrating around the same or a similar time instance. While in theory, this could happen, empirically, we see this happening only in the first algorithm's iterations and then disappear as we iterate the algorithm until convergence via backfitting (see below). Nevertheless, in the implementation, we also include a postprocessing step that removes overlapping credible sets, as done in related works \cite{wang2020simple}. We study this issue in Section~\ref{sec:sim_robu}.

To choose $L$,  one could have prior information over $K$, as in the liver procurement example. In which case, the choice is simple as we can set it equal to that. Lacking prior knowledge, a heuristic is to run Algorithm~\ref{alg:prisca_fit} for multiple values of $L$ starting from one, and stop to increase $L$ when $\hat{K}$ stops rising. We call such a heuristic auto-PRISCA, because it is essentially parameter free since $L$ is not required and the algorithm is not sensitive to $a_0$ as long as it is small. A final possibility is to use the localization rate $\sqrt{T\log T}$ as the minimum spacing condition and set $L$ equal to a number proportional to $T/\sqrt{T\log T}$.

Finally, we notice that each iteration of  Algorithm~\ref{alg:prisca_fit} requires a worst-case cost of $O(TL)$. This is not a problem in practice since the algorithm typically requires a small number of iterations to converge. Thus, for typical datasets Algorithm~\ref{alg:prisca_fit} has an effective overall computational complexity of $O(TL)$.


\subsection{Convergence: Algorithm~\ref{alg:prisca_fit} as Variational Bayes}

Algorithm~\ref{alg:prisca_fit} offers an efficient way to approximate PRISCA's posterior distribution without sacrificing accuracy, as we will show in Section~\ref{sec:sim}. The algorithm somewhat mimics the model's rationale, with multiple single change point models fitted recursively. Although, it is clear that the output offers, at best, an approximation of the true posterior distribution. Given the remarkable empirical performance, a natural question is why Algorithm~\ref{alg:prisca_fit} works and whether there is an underlying model it approximates. In this subsection, we address this question and establish Algorithm~\ref{alg:prisca_fit}'s convergence as a byproduct.  Our result builds on  \cite{wang2020simple}, who show that their recursive algorithm is a Variational Bayes (VB) approximation to the posterior distribution. Despite lacking an additive structure, we will show similar results in our context. 

Let $p$ denote the target posterior distribution. Given that $p$ is seldomly available in closed-form, a popular way to approximate $p$ is VB \cite{wai08}: let  $\mathcal{Q}$ be an arbitrary family of distributions, one chooses  $q\in \mathcal{Q}$ to approximate $p$ with minimal Kullback-Leibler divergence \cite{kul51} between $q$ and $p$, \textit{i.e.}, $q= \argmin_{q \in \mathcal{Q}}KL(q||p)$. If we do not restrict $\mathcal{Q}$ and can solve the resulting optimization problem, we can achieve $KL(q||p)=0$, which corresponds to $q\equiv p$. It means that, under no restrictions on $\mathcal{Q}$, we are simply formulating the problem of calculating the posterior as an optimization problem. VB becomes an approximation when we restrict the class $\mathcal{Q}$.  Such restriction is chosen to make the optimization problem tractable; for example, Mean-Field VB (MFVB) assumes that the variational distribution $q$ factorizes over the variables \cite{wai08}.

Even when imposing restrictions on $\mathcal{Q}$, one seldomly minimizes the KL divergence. The KL between the posterior and the variational family can be rewritten as
\begin{align*}
	KL(q||p)=& \int q(\theta) \log \frac{q(\theta)}{p(\theta|\bfy)}  \di \theta \\ \nonumber
	=&\log p(\bfy)-\int q(\theta)\log \frac{p(\bfy,\theta)}{q(\theta)} \di \theta \nonumber\\
	=& \log p(\bfy) - ELBO(q,\pi,\bfy),
\end{align*}
where $\pi$ denotes the prior distribution on the parameter $\theta$ and the last term is called ``evidence lower bound" (ELBO). 
We are making explicit the dependence of $ELBO(q,\pi,\bfy)$ on the prior distribution $\pi$, which is implicit in $p(\bfy,\theta)$. The $\arg \max_{q \in \mathcal{Q}} ELBO(q,\pi,\bfy)$ is equivalent to the minimizer of $KL(q||p)$ because the marginal log-likelihood $\log p(\bfy)$ does not depend on $q$. The advantage of working with the ELBO is that the optimization can be solved analytically in several instances while the minimization of the KL is not. This is because the intractable normalizing constant $p(\bfy)$ is not involved in the ELBO \cite{bishop2006pattern}.

Let $\mathcal{T}^2$ be a $T\times T$ diagonal matrix with entries $ \bftau^2$, $\mathcal{T}_l^2$, the corresponding matrices with entries $\bftau^2_l$, and $|\cdot|$ the determinant. In our setting, the ELBO can be computed in closed form
\begin{align}
	ELBO(q,\pi ,\bfy):=&\E_q[\log p(\bfy |  \bftau^2_{1:L},\bfgamma_{1:L},\sigma^2)] +\\ \nonumber
	&+ \E_q\bigg[\log \frac{\pi(\bftau^2_{1:L},\bfgamma_{1:L})}{q( \bftau^2_{1:L},\bfgamma_{1:L})}\bigg]\\ \nonumber
	=& - \E_{q}\bigg[\frac{\log |\mathcal{T}^2|}{2}\bigg] -\E_{q}\bigg[\frac{\bfy^T \mathcal{T}^2\bfy}{2 \sigma^2}\bigg]+ \\ &+ \E_{q}\bigg[\log \frac{\pi(\bftau^2_{1:L},\bfgamma_{1:L})}{q(\bftau^2_{1:L},\bfgamma_{1:L})}\bigg] + \text{const.}, \nonumber
\end{align}
where the expected values are computed with respect to $q(\bftau^2_{1:L},\bfgamma_{1:L})$. If we assume that $q$ factorizes over each component through the following mean-field assumption
\begin{equation} \label{eq:mf}
	q(\bftau^2_{1:L},\bfgamma_{1:L})=\prod_{l}q_l(\bftau^2_l,\bfgamma_l),
\end{equation}  
we can further simplify the above as a function of expected values computed with respect to the individual $q_l$
\begin{align}\label{eq:elbodecomposed_q_l}
	ELBO(q,\pi, \bfy)= &  -\sum_l\E_{q_l}\bigg[\frac{\log |\mathcal{T}_{l}^2|}{2}\bigg] -\sum_t \frac{y^2_t \prod_l \E_{q_l}[\tau^2_{t,l}]}{2 \sigma^2}+ \\  \nonumber
	&\sum_l \E_{q_l}\bigg[\log \frac{g_l( \bftau^2_l,\bfgamma_l)}{q_l(\bftau^2_l,\bfgamma_l)}\bigg] +\text{const.},
\end{align}
where $\E_{q_l}[\tau^2_{t,l}]$ can be computed analytically via \eqref{eq:expectau}. The contribution to $ELBO(q,\pi, \bfy)$ relative to the $l$th effect is 
\begin{align}
	ELBO_l(q_l,\pi_l,\bfy):=&-\E_{q_l}\bigg[\frac{\log |\mathcal{T}^2_{l}|}{2}\bigg] -\E_{q_l}\bigg[\frac{\bfy^T \mathcal{T}_l^2\bfy}{2 \sigma^2}\bigg]+\\ \nonumber
	&+ \E_{q_l}\bigg[\log \frac{g_l(\bftau^2_l,\bfgamma_l)}{q_l(\bftau^2_l,\bfgamma_l)}\bigg],
\end{align}
where the second term is $\sum_t y^2_t \E_{q_l}[\tau^2_{t,l}]/2 \sigma^2$. We are now ready to state the key result of the subsection.  

\begin{proposition}\label{prop1}
	Let $\overline{\bfr}_l^2= \bfy^2 \circ \prod_{l'\neq l} \overline{\bftau}_{l'}^2$ be the residuals and $\pi_l$ be the prior distributions on $\bftau^2_l$ and $\bfgamma_l$ given in \eqref{eq:prisca}. Then we have that
	$$\argmax_{q_l} ELBO(q,\pi,\bfy) = \argmax_{q_l} ELBO_l(q_l,\pi_l,\overline{\bfr}_l).$$
\end{proposition}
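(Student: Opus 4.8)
The plan is to argue by coordinate-wise isolation. Holding all factors $q_{l'}$ with $l'\neq l$ fixed, I would collect from the decomposed objective \eqref{eq:elbodecomposed_q_l} exactly those terms that depend on the remaining factor $q_l$, and then match them, term by term, against $ELBO_l(q_l,\pi_l,\overline{\bfr}_l)$. Since the two objectives will be shown to differ only by an additive constant that does not involve $q_l$, they must share the same maximizer over $q_l$, which is precisely the claimed identity.

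First I would handle the two ``easy'' groups of terms. The log-determinant term $-\sum_{l'}\E_{q_{l'}}[\log|\mathcal{T}^2_{l'}|/2]$ and the prior/entropy term $\sum_{l'}\E_{q_{l'}}[\log(g_{l'}/q_{l'})]$ are already written as sums over components (this uses $|\mathcal{T}^2|=\prod_{l'}|\mathcal{T}^2_{l'}|$ and the a priori independence of the components in \eqref{eq:prisca}). Consequently only their $l$-th summands involve $q_l$, while every summand with $l'\neq l$ is constant in $q_l$. These two $l$-th summands reproduce verbatim the first and third terms of $ELBO_l$.

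The one genuine computation is the data-fit term. Here I would invoke the mean-field factorization \eqref{eq:mf} to split the product over components as
$$\prod_{l'} \E_{q_{l'}}[\tau^2_{t,l'}] = \E_{q_l}[\tau^2_{t,l}]\cdot \prod_{l'\neq l} \E_{q_{l'}}[\tau^2_{t,l'}] = \E_{q_l}[\tau^2_{t,l}]\cdot \prod_{l'\neq l}\overline{\tau}^2_{t,l'},$$
recognizing the second factor as the $t$-th entry of $\prod_{l'\neq l}\overline{\bftau}^2_{l'}$, which is a fixed number once the other factors are held constant. The data-fit contribution that depends on $q_l$ therefore becomes $-\sum_t \overline{r}^2_{t,l}\,\E_{q_l}[\tau^2_{t,l}]/(2\sigma^2)$, where $\overline{r}^2_{t,l}=y^2_t\prod_{l'\neq l}\overline{\tau}^2_{t,l'}$ is exactly the $t$-th entry of the residual vector $\overline{\bfr}^2_l$. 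This matches the second term of $ELBO_l$ with $\bfy$ replaced by $\overline{\bfr}_l$.

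I would then conclude by collecting the matched pieces: the part of $ELBO(q,\pi,\bfy)$ that varies with $q_l$ equals $ELBO_l(q_l,\pi_l,\overline{\bfr}_l)$ up to a constant independent of $q_l$, so maximizing either expression over $q_l$ yields the same optimizer. The main (and only mild) obstacle is the data-fit step, where one must verify that the product of component expectations factors cleanly under the mean-field assumption and that the non-$l$ factors coincide with the plug-in residual weights defining $\overline{\bfr}_l$; once this factorization is in hand, the remaining identification is purely bookkeeping.
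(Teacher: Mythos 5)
Your argument is correct and follows essentially the same route as the paper's proof: isolate the terms of the decomposed ELBO that depend on $q_l$, note that the log-determinant and prior/entropy terms match those of $ELBO_l$ directly, and rewrite the data-fit term via the mean-field factorization so that the non-$l$ expectations collapse into the residual weights $\overline{\bfr}_l^2$. Your write-up simply makes explicit a few bookkeeping steps (the factorization of $|\mathcal{T}^2|$ and the prior independence) that the paper leaves implicit.
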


\begin{proof}
	The dependence on $q_l$ of the first and third terms of $ELBO(q,\pi,\bfy)$ is the same as  that of $ELBO_l(q_l,\pi_l,\overline{\bfr}_l)$. To see that the dependence on $q_l$ of the corresponding terms is also the same, we rewrite it as 
	$$\sum_t \frac{y^2_t \prod_l \E_{q_l}[\tau^2_{t,l}]}{2 \sigma^2}= \sum_t \frac{\overline{r}_{t,l}^2 \E_{q_l}[\tau^2_{t,l}]}{2 \sigma^2}=\E_{q_l}\bigg[\frac{\overline{\bfr}_l^T \mathcal{T}_l^2\overline{\bfr}_l}{2 \sigma^2}\bigg].$$
	The claim then follows.
\end{proof}

\begin{corollary}\label{cor1} The solution to the maximization in Proposition~\ref{prop1} is the posterior distribution of the single change point model defined in \eqref{eq:posterior}-\eqref{eq:postalpha}-\eqref{eq:posttau}, where parameters $\bfalpha_l, (a_{t,l},b_{t,l})_{t=1:T}$ are computed using $\overline{\bfr}_l^2$.
\end{corollary}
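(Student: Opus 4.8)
The plan is to chain Proposition~\ref{prop1} together with the standard evidence-lower-bound identity, and then invoke the conjugacy already established for the single change point model in Section~\ref{sec:single}. By Proposition~\ref{prop1}, maximizing the full objective over the factor $q_l$ (with the remaining factors held fixed) is the same as maximizing $ELBO_l(q_l,\pi_l,\overline{\bfr}_l)$. So the first step is to recognize $ELBO_l(q_l,\pi_l,\overline{\bfr}_l)$ as exactly the evidence lower bound of the single change point model~\eqref{eq:single} in which the observed vector $\bfy$ is replaced by the residual vector $\overline{\bfr}_l$ and the prior is the single-effect prior $\pi_l$, i.e. the product of the Categorical law on $\bfgamma_l$ and the $\mathrm{Gamma}(a_0,a_0)$ law on $s_l^2$ specified in~\eqref{eq:prisca}. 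Matching the three terms of $ELBO_l$ against the log-likelihood, the log-prior, and the negative log-$q_l$ contributions of the single-effect model makes this identification, with the factor $g_l$ appearing in $ELBO_l$ playing the role of $\pi_l$.

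Next I would apply the decomposition $ELBO_l(q_l,\pi_l,\overline{\bfr}_l)=\log p(\overline{\bfr}_l)-KL\big(q_l\,\|\,p(\bftau^2_l,\bfgamma_l\mid\overline{\bfr}_l)\big)$, where $p(\bftau^2_l,\bfgamma_l\mid\overline{\bfr}_l)$ denotes the exact posterior of the single-effect submodel on the data $\overline{\bfr}_l$. Since $\log p(\overline{\bfr}_l)$ does not depend on $q_l$, maximizing $ELBO_l$ is equivalent to minimizing this KL divergence. The essential observation is that the mean-field assumption~\eqref{eq:mf} factorizes $q$ only across the $L$ effects and imposes no further constraint on the joint law $q_l(\bftau^2_l,\bfgamma_l)$ of a single effect. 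Hence the inner maximization ranges over all distributions on the pair $(\bftau^2_l,\bfgamma_l)$, the KL term can therefore be driven to zero, and the unique maximizer is attained at $q_l=p(\bftau^2_l,\bfgamma_l\mid\overline{\bfr}_l)$.

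The final step is to write this exact posterior explicitly. Because the single change point model~\eqref{eq:single} is conjugate, its posterior on data $\overline{\bfr}_l$ is given in closed form by~\eqref{eq:posterior}--\eqref{eq:postalpha}--\eqref{eq:posttau}: the location $\bfgamma_l$ is Categorical with weights $\bfalpha_l$ from~\eqref{eq:postalpha}, and $s_l^2\mid\gamma_{t,l}=1$ is $\mathrm{Gamma}(a_{t,l},b_{t,l})$ with parameters from~\eqref{eq:posttau}, all evaluated with $\overline{\bfr}_l^2$ substituted for $\bfy^2$. This is precisely the maximizer claimed in the statement, so the corollary follows.

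The step I expect to demand the most care is the second one: justifying that $q_l$ is genuinely unrestricted \emph{within} effect $l$. The factorization~\eqref{eq:mf} could be misread as also separating $\bftau^2_l$ from $\bfgamma_l$, which would collapse the argument, since the single-effect posterior couples the scale $s_l^2$ to the location $\bfgamma_l$ through the $t$-dependent parameters in~\eqref{eq:posttau}. Making it explicit that~\eqref{eq:mf} treats each pair $(\bftau^2_l,\bfgamma_l)$ as a single unfactorized block is exactly what permits the KL term to vanish and yields the conjugate single-effect posterior, rather than some further-restricted mean-field approximation of it.
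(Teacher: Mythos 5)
Your argument is correct and follows the same route the paper takes: Proposition~\ref{prop1} reduces the block update to maximizing $ELBO_l(q_l,\pi_l,\overline{\bfr}_l)$, and since the mean-field assumption \eqref{eq:mf} leaves $q_l$ unrestricted as a joint law on $(\bftau^2_l,\bfgamma_l)$, the KL term vanishes at the exact conjugate single-effect posterior \eqref{eq:posterior}--\eqref{eq:posttau} evaluated on $\overline{\bfr}_l^2$. Your closing remark about \eqref{eq:mf} factorizing only \emph{across} effects, not within them, is precisely the point the paper relies on when it notes that $q_l$ was not restricted to a particular family.
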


Proposition~\ref{prop1} and Corollary~\ref{cor1} establish that Algorithm~\ref{alg:prisca_fit} is a VB approximation to the posterior distribution of PRISCA, where the approximation lies in the fact that we have factorized the $L$ effects using a mean-field approximation. Corollary~\ref{cor1} holds because we did not restrict $q_l$ to belong to a particular family. Hence, the maximizer of $ELBO_l$ is the distribution $q_l$ that makes the $KL$ divergence between $q_l$ and the posterior distribution of a single change point model equal zero. 

Proposition~\ref{prop1} formalizes that each step of Algorithm~\ref{alg:prisca_fit} maximizes the component-wise ELBO. \textit{I.e.}, the procedure to fit PRISCA is a block-wise coordinate ascent. The second result of the subsection follows.


\begin{proposition} Assuming $0<a_0,\sigma^2 <\infty$, the sequence $(\bfalpha_{l,i})_{1:L,i\geq 1}$, $[a_{t,l,i},b_{t,l,i}]_{l=1:L,t=1:T,i\geq 1}$ defined Algorithm~\ref{alg:prisca_fit} converges to a stationary point of $ELBO(q,\pi,\bfy)$. 
\end{proposition}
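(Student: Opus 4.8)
The plan is to read Proposition~\ref{prop1} as the statement that Algorithm~\ref{alg:prisca_fit} is exactly block-wise coordinate ascent on the objective $ELBO(q,\pi,\bfy)$ under the mean-field factorization \eqref{eq:mf}: at each inner step the current factor $q_l$ is replaced by $\argmax_{q_l} ELBO(q,\pi,\bfy)$, holding $(q_{l'})_{l'\neq l}$ fixed, and Corollary~\ref{cor1} gives this maximizer in closed form. The standard route to convergence then has three ingredients: (i) the objective increases monotonically along the iterations; (ii) the objective is bounded above, so its values converge; and (iii) this is promoted to convergence of the iterates to a stationary point of $ELBO$ using compactness of the parameter space together with continuity of the update map.

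For (i) and (ii): because every inner update is an exact maximization of $ELBO(q,\pi,\bfy)$ over one block, the sequence of ELBO values produced by Algorithm~\ref{alg:prisca_fit} is non-decreasing. It is bounded above by $\log p(\bfy)$, since $ELBO(q,\pi,\bfy)=\log p(\bfy)-KL(q\|p)\le \log p(\bfy)$; the assumption $0<a_0,\sigma^2<\infty$ is exactly what guarantees $\log p(\bfy)$ is finite, as the $\text{Gamma}(a_0,a_0)$ priors are then proper and the Gaussian likelihood with finite $\sigma^2$ has a finite normalizing constant. A monotone bounded sequence converges, so the ELBO values converge.

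For (iii): I would encode the state of the algorithm by the finite-dimensional vector of variational parameters $\big(\bfalpha_l,(a_{t,l},b_{t,l})_{t=1:T}\big)_{l=1:L}$, equivalently by $(\overline{\bftau}_l^2)_{1:L}$. This state lies in a compact set: from \eqref{eq:posttau} each $a_{t,l}=a_0+(T-t+1)/2$ is fixed and $b_{t,l}\ge a_0>0$ is bounded below, so each $\hat s_{t,l}^2=a_{t,l}/b_{t,l}$ is bounded, each $\bfalpha_l$ lies on the probability simplex, and hence by \eqref{eq:expectau} every entry of $\overline{\bftau}_l^2$ stays in a bounded interval. The block-update map given by \eqref{eq:postalpha}--\eqref{eq:posttau} and \eqref{eq:expectau} is continuous in this state, and each block subproblem has a unique maximizer (the single-change-point posterior is uniquely determined by the residuals $\overline{\bfr}_l$). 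Invoking the block-coordinate-ascent argument of \cite{wang2020simple} --- a continuously differentiable objective with unique per-block maximizers on a compact domain --- every limit point of the iterate sequence is a stationary point of $ELBO(q,\pi,\bfy)$, and compactness guarantees such limit points exist.

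The main obstacle is step (iii). Monotonicity and boundedness only yield convergence of the objective values; upgrading this to convergence of the iterates to a stationary point requires verifying the hypotheses of the coordinate-ascent stationarity theorem in the present setting. The delicate point is the discrete indicator $\bfgamma_l$: stationarity must be phrased in terms of the continuous variational parameters $\bfalpha_l$ (on the simplex) and the Gamma parameters, through which $ELBO$ is smooth, rather than in terms of $\bfgamma_l$ directly. Carefully setting up this reparametrization, checking differentiability of the component ELBO in these coordinates, and confirming uniqueness of the block maximizer is where the real work lies; the monotone-convergence part is routine.
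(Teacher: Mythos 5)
Your proposal is correct and follows essentially the same route as the paper: the paper's proof consists of observing that Proposition~\ref{prop1} makes Algorithm~\ref{alg:prisca_fit} an exact block-wise coordinate ascent on the ELBO and then citing the standard convergence results of \cite{bert99,ort00}, with the hyperparameter condition $0<a_0,\sigma^2<\infty$ serving precisely the role you identify (ensuring each single-change-point subproblem has a properly defined posterior, hence a well-defined unique block maximizer). Your write-up is in fact more explicit than the paper's about the compactness, continuity, and parametrization details needed to invoke those results.
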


The convergence to a stationary point follows standard results on block-wise coordinate ascent \cite{bert99,ort00}.   With the condition on the hyperparameters, we require the posterior distribution of each ``single change point problem" we solve to be properly defined. 

In this subsection, we shown that Algorithm~\ref{alg:prisca_fit} estimates the parameters of a variational approximation of the true posterior. The first byproduct of this characterization is the algorithm's convergence. The second one is a closed-form objective function, the ELBO, that Algorithm~\ref{alg:prisca_fit} is maximizing at each iteration. This offers us a formal stopping rule which we will use in the following: stop Algorithm~\ref{alg:prisca_fit} when the growth in the ELBO is less than a prespecified threshold $\epsilon$.

\section{Extensions}\label{sec:ext}

The single change point model consistency allows for dependent data, varying mean, and non-Gaussian data. It is anyway important to extend PRISCA to a range of more realistic and challenging settings to boost  the finite sample performance. The scope is mostly to illustrate how easily our framework can be generalized. We acknowledge the limitations of the algorithms discussed below, and that much research is required to understand their theoretical and empirical properties. This goes beyond the scope of the paper and leave it to future work.

To accommodate a varying mean or an autoregressive process, we introduce additional model parameters, denoted as $\theta$. The likelihood becomes  $p(\bfy | \bftau^2_{1:L},\bfgamma_{1:L},\theta)$ (we omit the baseline variance $\sigma^2$ for parsimony). To mitigate the computational burden of our algorithm, we propose to use an EM-type algorithm, where we iterate between optimizing for $\theta$, and computing the variational approximation $q$. \textit{I.e.} $\theta$ and $q$, and repeat iteratively for $t>1$
\begin{align*}
	\theta^{(t+1)}&=\argmax \E_{q^{(t)}} [\log p(\bfy | \bftau^2_{1:L},\bfgamma_{1:L},\theta)]\\
	q^{(t+1)}& =\argmax ELBO_{\theta^{(t+1)}}(q,\pi,\bfy),
\end{align*}
where the $ELBO$ depends on parameter $\theta^{(t+1)}$ at time $t+1$. A difference with the traditional EM algorithm \cite{dempster1977maximum} is that the distribution $q$ is a variational approximation. Empirical observations suggest that this strategy maintains the computational feasibility of our proposal, albeit at the cost of losing the uncertainty quantification characteristic of Bayesian methods. To retain this uncertainty quantification, we can employ the Variational Bayesian EM algorithm (\textit{e.g.}, \cite{beal2003variational}), which is a straightforward extension of the approach above. This paper primarily focuses on change points, treating $\theta$ as a nuisance parameter. A fully Bayesian approach on $(\bftau^2_{1:L},\bfgamma_{1:L},\theta)$ will be left for future work.

\textbf{\textit{Smoothly changing mean.}} There are plenty of phenomena described by a stochastic process undergoing multiple variance change points in the presence of a smoothly varying mean trend \cite{gao19}. For example, the data considered in this paper exhibits such a pattern (liver temperatures, Figure~\ref{fig:liver}; wave heights, Figure~\ref{fig:ocean}). A possible model is $Y_t \sim N(f_t, \sigma_t^2)$, $t$ in $1$ to $T$, with $f_t$ a smooth function (\textit{e.g.}, assume it belongs to a reproducing kernel Hilbert space) and $\sigma_t^2$ is piecewise constant with $K$ breakpoints and sparse, in the sense that $K<<T$. \cite{gao19} tackle the setting $K=1$ with an algorithm that iterates a penalized weighted least square to estimate $f_t$ and the detection of a single change point with a generalized likelihood ratio. Each step is fed using the residuals computed from the previous one.

Algorithm~\ref{alg:prisca_fit} has a similar recursive structure. Rather than estimating $f_t$ and the change points in two separate steps, we include the estimation of the mean as part of each cycle of the backfitting algorithm \cite{fri81}. This is possible because we wrote our model as it was an additive model. Any algorithm to estimate $f_t$ with heteroskedastic noise can be used. Here, we estimate $f_t$ via a weighted least squares solution of the trend filtering algorithm \cite{kim2009ell_1,tib14}. The vector of weights for the least squares solution is the vector of expected precisions $\mathbf{w}= \prod_{l=1}^L \overline{\bftau}_l^2$, where the expectations $ \overline{\bftau}_l$ are computed with respect to the variational distribution and are used to scale the innovations. The \texttt{R} package \texttt{glmgen} implements this algorithm. We add trend filtering as an extra step in the iterative procedure we use to estimate PRISCA's parameters. The estimate $(\hat{f}_t)_{1:T}$ are used to compute residuals $\overline{\bfr}_l^2$ (see Algorithm~\ref{alg:prisca_fit}): 
\[
\overline{\bfr}_l^2= \left(y_t-\hat{f_t}\right)^2_{t=1:T} \circ \prod_{l'\neq l} \overline{\bftau}_l^2,
\]
which are used in the variance change point detection portion of the algorithm. PRISCA plus the extra step (we will call it TF-PRISCA) outputs estimates of $f_t$, the number and locations of multiple change points, and credible sets. To our knowledge, there is no available method with these features. 

\textbf{\textit{Autoregression.}} The theory in Section~\ref{sec:theory} covers the case of dependent data. Change point detection in the presence of dependent noise has also received considerable attention recently \cite{det20}. Here, we consider an extension of PRISCA that is designed to improve the finite sample performance of the algorithm in the presence of dependent data. Assume  $Y_t = \sum_{i=t-1}^{t-r} \phi_i Y_i +\epsilon_t$, and $\epsilon_t \sim N(0, \sigma_t^2)$, $t$ in $1$ to $T$, with $\sigma_t^2$ piecewise constant with $K$ breakpoints and sparse. For simplicity, assume that the order $r$ is known. The case of known AR coefficients $(\phi_i)_{1:r}$ is uninteresting: one can compute the noise sequence $(\epsilon_t)_{1:T}$ and fit PRISCA directly to it. 

In the case of unknown AR coefficients, PRISCA's extension is similar to the smoothly varying mean case: we add an additional step in the backfitting algorithm to estimate the AR coefficients. A simple solution is to add an extra iteration in Algorithm~\ref{alg:prisca_fit}, which consists of a weighted least squares estimation of $(\phi_i)_{1:r}$. Any method to estimate the coefficients could be employed. As an illustration, we estimate the AR coefficients as the weighted least square' solution of the AR linear model. The estimates $(\hat{\phi}_i )_{i=1:r}$ are then used to compute residuals $\overline{\bfr}_l^2$: 
\[
\overline{\bfr}_l^2= \left(y_t-\sum_{i=1}^r \hat{\phi}_i y_{t-i}\right)^2_{t=1:T} \circ \prod_{l'\neq l} \overline{\bftau}_l^2,
\]
where we set $(y_{-r},\dots,y_0)=\mathbf{0}$. Residuals feed the change point detection loop of Algorithm~\ref{alg:prisca_fit}. There is nothing specific to weighted least squares. As in the previous section, any method to estimate $(\hat{\phi}_i )_{i=1:r}$ in the presence of a time-varying variance can be used. We call this modification AR-PRISCA.

\begin{algorithm}[!b]
	\caption{Fit of TF-PRISCA or AR-PRISCA}
	\label{alg2}
	\begin{algorithmic}
		
		\State \textbf{Inputs:} \bfy, L, $a_0$, $\epsilon$,  (if AR-PRISCA, also $r$)
		\State \textbf{Output:} $(\bfalpha_l)_{1:L}$, $[a_{t,l},b_{t,l}]_{l=1:L,t=1:T}$
		
		\begin{enumerate}
			\item Initialize all entries of  $(\overline{\bftau}_l^2)_{1:L}$ equal to one, $\hat{f_t}=0$ for all $t$, and compute $ELBO$ via \eqref{eq:elbodecomposed_q_l}
			\item Repeat until convergence 
			
			\textit{Step A}. 
			\begin{itemize}
				\item Compute $\mathbf{w}= \prod_{l=1}^L \overline{\bftau}_l^2$
				\item Compute	$\overline{\bfr}^2= \left(y_t-\hat{f_t}\right)^2_{t=1:T} \circ \mathbf{w}$
			\end{itemize}
			If TF-PRISCA
			\begin{itemize}
				\item Compute $(\hat{f_t})^2_{t=1:T}$ via weighted least squares (weights $ \mathbf{w}$) using the trendfiltering (\texttt{glmgen} function in \texttt{R})
			\end{itemize}
			If AR-PRISCA
			\begin{itemize}
				\item Compute $(\hat{\phi}_i )_{i=1:r}$ via weighted least squares (weights $ \mathbf{w}$) of the linear model  
				\item Compute $\hat{f_t}= \sum_{i=1}^r \hat{\phi}_i y_{t-i}$ for all $t$. 
			\end{itemize}
			
			\textit{Step B}. 			
			For $l$ in $1$ to $L$
			
			\begin{itemize}
				\item $\overline{\bfr}_l^2=\left(y_t-\hat{f_t}\right)^2_{t=1:T} \circ \prod_{l'\neq l} \overline{\bftau}_l^2$ 
				\item Update
				$[a_{t,l},b_{t,l}]_{t=1:T}$ via  \eqref{eq:posttau} using $\overline{\bfr}_l^2$ in lieu of  $\bfy^2$
				\item Update $\bfalpha_l$ via \eqref{eq:postalpha} (to compute $P(\bfy | \gamma_t=1,\bfpi, \sigma^2,a_0)$ use (S1) in supplementary material)
				\item Update  $\overline{\bftau}_l$ via \eqref{eq:expectau}
			\end{itemize}
			
			\textit{Step C}. 	
			\begin{itemize}
				\item Compute $ELBO'$ via \eqref{eq:elbodecomposed_q_l}
				\item   Exit if $| ELBO'-ELBO| < \epsilon$ is true. If false, set $ELBO=ELBO'$
			\end{itemize}
			
		\end{enumerate}
	\end{algorithmic}
\end{algorithm}

\textbf{\textit{Misspecified models.}}  We developed the model under Gaussian assumptions and proved that the detection rule \eqref{eq:point} is robust to model mispecification. A growing body of literature tackles model misspecification with power posteriors, posterior distribution where the likelihood is raised to a certain power $\beta$, \textit{i.e.},  $p^{\beta}(\theta| \bfy) \propto \mathcal{L}(\bfy | \theta)^{\beta} \pi(\theta)$ for some power $\beta<1$ (the standard notation for the power is $\alpha$, we employ $\beta$ here as we used  $\bfalpha$ for the posterior distribution over time instances) \cite{grunwald2012safe,holmes2017assigning,miller2018robust}. The corresponding variational approximation -- labeled fractional variational Bayes -- has also been studied \cite{alquier2020concentration}.

Incorporating these ideas into PRISCA is straightforward, as most of the distribution involved in the standard formulation belong to the exponential family as in the example of \cite{bhattacharya2019bayesian}. To derive the power VB equivalent of PRISCA, we start from the single change point model presented in Section~\ref{sec:single} and redo the calculations detailed in Section~\ref{sec:single_details}.
To compute $p^{\beta}(\bfgamma, \bftau| \bfy)$ -- the power posterior equivalent of $p(\bfgamma, \bftau| \bfy)$ -- we have to compute 
\[
p^{\beta}(s^2|\gamma_t=1,\bfy)\propto \left[\prod^T_{i=t} N(y_i | 0, s^{-2} \sigma^2)\right]^{\beta}  \text{Gamma}(s^2|a_0,b_0),
\]
where the above equals  $\text{Gamma}(a^{\beta}_t,b^{\beta}_t)$ with parameters given by
\begin{equation*}\label{SMeq:gamma_frac}
	a^{\beta}_t =a_0 + \beta \, \frac{T-t+1}{2} \,\,\,\, \text{and} \,\,\,\, b^{\beta}_t=a_0+ \beta \,\frac{\bfy_{t:T}^T \bfy_{t:T}}{2 \sigma^2}.
\end{equation*}
Hence, we still have a Gamma posterior with hyperparameters directly including the power $\beta$. 
Similarly,  $p^{\beta}(\bfgamma| \bfy)$ is a Multinomial distribution with parameters $(1,\bfalpha^{\beta})$, where $\bfalpha^{\beta}$ is a vector with entries $\alpha^{\beta}_t \propto p^{\beta}(\gamma_t=1|\bfy)$. To compute this we need to compute the marginal likelihood
\[
\begin{array}{lll}
	p^{\beta}\left(   \bfy |    \gamma_t = 1 \right) &=& \displaystyle   \int  	[p \left(   \bfy |    \gamma_t = 1,s \right)]^{\beta} dP(s^2 |a_0 )\\
	& \propto & \exp\bigg(   -\frac{\beta}{2\sigma^2} \sum_{j=1}^{t-1} y_j^2   \bigg)
	\frac{\Gamma(  a^{\beta}_t  )}{(  b^{\beta}_t )^{ a^{\beta}_t  }}, \nonumber
\end{array}
\]
where $\Sigma$ is defined in Section~\ref{sec:single_details}.  This means that the power posterior of the single change point model $p^{\beta}(\bfgamma, \bftau| \bfy)= \text{Gamma}( a^{\beta}_t , b^{\beta}_t ) \text{Multinomial}(1,\bfalpha^{\beta})$ has the same functional form as the solution of the ``regular" model, and we just have to change the parametrization. 

The single change point model is the building block of PRISCA. Since the power posterior of the single change point model has the same functional form of the ``standard posterior", Algorithm $1$ will remain the same, with the only difference that we will have to account for the new parameters. Similarly, it is trivial to show that it remains a variational approximation, now of the power posterior instead of the regular posterior.

\textbf{\textit{Non-conjugate priors.}}  The building block of our proposal is the single change-point model described in Section~\ref{sec:single}. We introduced it with a Gamma prior on the unknown precision parameter (see (1) in the manuscript), which is the natural choice due to conjugacy. The main advantage of this choice is the possibility to compute the quantities in Section~\ref{sec:single_details} in closed form without resorting to numerical approximation. PRISCA's idea is to heavily rely on solving efficiently single-change point models. A natural question is whether our proposal generalizes to other prior distributions. In hierarchical models, \cite{gelman2006prior} discusses the limitations of the Gamma prior as a ``noninformative prior" and constructs a new folded-noncentral-$t$ family (to be precise, the paper discusses Inverse-Gamma on the variance). 

PRISCA does not require a closed-form expression for the variance posterior distribution. All one needs to apply Algorithm $1$ is the marginal likelihood $p(\bfy  | \gamma_t=1)$ and the posterior expected value $\E [s^2| \bfy,\gamma_t=1]$ of the single change point model for all $t$, which in turns allows us to compute $\E [\tau^2_t| \bfy]$. If we were to choose a different prior on the variance, methods like MCMC would allow us to approximate the posterior, but much of the computational benefits of PRISCA would be lost. A simple alternative is to approximate  $p(\bfy  | \gamma_t=1)$ with a Laplace approximation, while for $\E [s^2| \bfy,\gamma_t=1]$  one can use the MAP estimator, which we need to compute anyway for the Laplace approximation. Alternatively,  there are instances where $\E [s^2| \bfy,\gamma_t=1]$  is available in closed-form even for non-conjugate priors \cite{polson2012half}.

We study whether these approximations introduce a numerical error that makes not viable the use of PRISCA with nonconjugate priors. To separate the effect of the numerical approximation from the choice of the prior, we consider the same Gamma prior on $s^2$, but we solve PRISCA using Laplace approximations for $p(\bfy  | \gamma_t=1)$ and the MAP estimator for $\E [s^2| \bfy,\gamma_t=1]$. Both quantities are available in closed form: $\E [s^2| \bfy,\gamma_t=1] \approx s^{2, MAP}_t=[(T-t)/2 + (a_0-1)]/(\sum_{i=t}^T y_i^2 + b_0)$ and we have that $p(\bfy  | \gamma_t=1) \approx \widehat{p_{G}(\bfy  | \gamma_t=1)}$, where $p_G$ denotes the Laplace approximation, which is equal to 
\[
e^{ -(\sum_{i=t}^T y_i^2 +b_0)s^{2, MAP}_t + \frac{(T-t)/2 + (a_0-1)}{s^{2, MAP}_t }} \frac{\sqrt{2 \pi}}{\sqrt{e^{ \frac{(T-t)/2 + (a_0-1)}{s^{2, MAP}_t}}}}.
\]
We consider a second model with a Half-Cauchy prior on the standard deviation. This is one of the priors discussed by \cite{gelman2006prior}. Recall that if $\sigma$ is Half-Cauchy distributed, it will have density
\[
p(\sigma| \xi) = \frac{2}{\pi \xi [1+(\frac{\sigma}{\xi})^2]},
\]
where $\xi$ can be tuned  to make the prior ``uninformative". Here, the log posterior
\[
\begin{array}{l}
	\log p(\sigma| \bfy, \gamma_t=1,\xi) \propto -\frac{\sum_{i=t}^T y_i^2}{2 \sigma^2} - (T-t) \log \sigma \\
	\,\,\,\,\,\,\,\,\,\,\,\,\,\,\,\,\,\,\,  \,\,\,\,\,\,\,\,\,\,\,\,\,\,\,\,\,\,\,  \,\,\,\,\,\,\,\,\,\,\,\,\,\,\,\,\,\,\,- \log  \left[1+(\frac{\sigma}{\xi})^2\right]\\
\end{array}
\]
can be maximized using any numerical solver. In our implementation, we use the \texttt{R} function \texttt{optim}.

\textbf{\textit{Multiple observations per time point - Periodic data.}} Throughout the paper, we employed the standard assumption that a single sample is available per time point. However, there are situations where practitioners require a model that can accommodate multiple observations per time point. For example, this can happen if the reported data are binned into time intervals.

An interesting case of repeated observations is when measurements are taken cyclically (\textit{e.g.}, at every hour of the day, every day), and the observations process exhibit a periodic structure (\textit{e.g.}, there is a shift in the morning and one at night). There are different proposals to deal with this setting \cite{lund07,ush22}. A possible solution is to consider each cycle as a complete realization of the process, resulting in multiple observations per time point. Here, the cycle length equals $T$, and $n_t$ is the number of samples at time $t$. Bayesian change point methods naturally handle multiple observations collected at a given instance \cite{liu20,cap21}, and PRISCA is no exception.

\section{Simulations}
\label{sec:sim}

\begin{table}\centering 
	\caption{\small{\textbf{Simulation study.} Averages across $300$ repetitions for different sample sizes (T): bias $K-\widehat{K}$ (the lower, the better), Hausdorff statistics $d(\widehat{\mathcal{C}},\mathcal{C}^*)$ (the lower, the better), time (in seconds), average credible set length (length), and coverage of the sets conditional on detection (cond. cov.). PELT, BINSEG, and SEGNEI are point estimators, so it is not possible to give length and cond. cov. PRISCA has $L=\lfloor T/30  \rfloor$. ora-PRISCA has $L=K$, auto-PRISCA $L$ is set automatically. $\mathcal{CS}$ of PRISCA-based methods and MCP are constructed at $p=0.9$.}} 
	\label{tab:sim} 
	\scalebox{0.8}{\begin{tabular}{@{\extracolsep{5pt}} cc|ccc|cc} 
			\\[-1.8ex]\hline 
			
			T & Method & $K-\widehat{K}$ & $d(\widehat{\mathcal{C}},\mathcal{C}^*)$  & Time & Length & Cond. Cov. \\ 
			\hline \\[-1.8ex] 
			200 & MCP & 1.7 & 106.3 & 27.86 & 24.04 & 0.95 \\ 
			& auto-PRISCA & 1.38 & 72.57 & 0.15 & 13.49 & 0.81 \\ 
			& BINSEG & 2.08 & 117.72 & 0 &  &  \\ 
			& PELT & 1.92 & 106.96 & 0 & &  \\ 
			& PRISCA & 1.49 & 79.48 & 0.01 & 13.33 & 0.82 \\ 
			& ora-PRISCA & 1.45 & 77.74 & 0.01 & 14.6 & 0.83 \\ 
			& SEGNEI & 1.63 & 103.94 & 0.21 &  &  \\ 
			\hline
			500 & MCP & 2.64 & 182.12 & 219.8 & 57.37 & 0.99 \\ 
			& auto-PRISCA & 2.03 & 110.89 & 0.61 & 18.99 & 0.82 \\ 
			& BINSEG & 3.05 & 206.9 & 0.01 &  &  \\ 
			& PELT & 2.73 & 175.9 & 0 &  &  \\ 
			& PRISCA & 2.02 & 124.96 & 0.18 & 18.68 & 0.84 \\ 
			& ora-PRISCA & 1.91 & 114.38 & 0.04 & 20.91 & 0.85 \\
			& SEGNEI & 2.41 & 185.84 & 0.34 &  &  \\ 
			\hline
			1000 & MCP & 4.21 & 360.61 & 1830.2 & 93.46 & 0.99 \\ 
			& auto-PRISCA & 2.94 & 186.07 & 2.82 & 23 & 0.84 \\
			& BINSEG & 3.91 & 342.04 & 0.01 &  &  \\ 
			& PELT & 3.39 & 263.76 & 0.01 &  &  \\ 
			& PRISCA & 2.55 & 200.83 & 1.69 & 23.91 & 0.86 \\
			& ora-PRISCA & 2.55 & 181.32 & 0.17 & 26.21 & 0.86 \\ 
			& SEGNEI & 3.22 & 301.04 & 1.1 &  &  \\
			\hline \\[-1.8ex] 
	\end{tabular} }
\end{table}

				%
		
		\subsection{Multiple change points}
		
		To validate PRISCA, we simulate sequences of zero-mean Gaussian observations experiencing multiple changes in variance. The setup of the simulation study is inspired by Killick et al.  \cite{killick2012optimal}. We consider data sets with varying lengths ($T \in \{200,500,1000\}$), varying change point locations, and different variances within each segment. The number of changes is set to $K=\lfloor T^{1/2}/4 \rfloor$. For each simulate data set, we sample new change point locations $\mathcal{C}=\{t_1^*,\ldots,t_K^*\}$  from a uniform distribution on $\{2,\ldots,T-2\}$ with an additional constraint that the minimum spacing between change points is $\min \{T^{1/2},30\}$ ($T^{1/2}$ is justified by the localization rate, see Section \ref{sec:theory}). Variances within each segment ($\sigma^2_1,\ldots,\sigma_{K+1}^2$) are $i.i.d.$ samples from a lognormal distribution with mean $0$, standard deviation $\log(10)/2$.  We generate $300$ data per $T$. 
		
		By design, the simulation study is challenging because variances in consecutive blocks ($\sigma^2_i$ and $\sigma^2_{i+1}$) have positive probability of being practically identical. What matters is the relative performance of PRISCA vis-a-vis state-of-the-art methods. We compare PRISCA's point estimates to PELT \cite{killick2012optimal}, Binary Segmentation (BINSEG) \cite{scot74}, Segment Neighbourhoods (SEGNEI) \cite{aug89} (the three methods are available in the \texttt{R} package \texttt{changepoint}; \cite{killick2014changepoint}). We also include the Bayesian method MCP \cite{mcp}, which efficiently implements the Gibbs sampler of Carlin et al. \cite{carlin1992hierarchical}. MCP is the only method that requires knowledge of the true number of change points. We can obtain both point estimates and credible sets from MCP posterior distributions using the same rules employed by PRISCA.  We also include a comparison to the Bayesian method of Fearnhead \cite{fea06} in the Supplementary Material. We do not add it in the manuscript because, to compare it to the other methods, we had to add a postprocessing step to obtain point estimates and credible sets. This affects the method's performance and makes the comparison unfair.

		To measure accuracy, we use $K-\widehat{K}$ to measure how well each estimator recovers the actual number of change points. We also consider a Hausdorff-like statistic
		$d(\widehat{\mathcal{C}}, \mathcal{C}^*)=\underset{\eta \in \mathcal{C}^*}{\max} \,\ \underset{x \in \widehat{\mathcal{C}}}{\min} \vert x- \eta\vert$ to assess how well each method estimates the true change points locations $t^*_1,\ldots, t^*_K$. To evaluate PRISCA's credible sets, we report the average coverage of each set conditional on detection. For this study, a changepoint is considered to be detected if PRISCA identifies its location within a distance of $\min \{T^{1/2},30\}/2$ of the true position. Such notion detection is used by Killick et al. \cite{killick2012optimal}, and we borrowed it in this study. We also report the average sets' length. These measures are not available for PELT, BINSEG, and SEGNEG because they provide only point estimates, but they are for MCP. 
		

		PRISCA's parameters are set to $a_0=.001$ (\textit{i.e.}, uninformative Gamma prior; results are unaffected by $a_0$, see Section~\ref{sec:sim_robu}), $L=\lfloor T/30 \rfloor$, (\textit{i.e.}, the maximum number of change points is set equal to the minimum spacing condition times the sequence length), and $\epsilon=.001$ (ELBO convergence). We initialize the algorithm as if there were no change points ($(\overline{\bftau}_l^2)_{1:L}$ equal to one), We also report results for auto-PRISCA ($L$ chosen automatically through the heuristic described in  Section~\ref{sec:prisca_algo}) and an ``oracle" version of PRISCA (ora-PRISCA) where we assume knowledge of the correct number of changes, \textit{i.e.}, $L=K$. Table~\ref{tab:sim} summarizes the results. 
		
		As expected by the study construction, no method accurately estimates $K$ ($K-\widehat{K}$ positive on average), but PRISCA-based methods attain the lowest bias across sample sizes. Similarly, PRISCA-based methods report the lowest $d(\widehat{\mathcal{C}}\vert\mathcal{C}^*)$. This suggests that our approach outperforms state-of-the-art methodologies in this particular simulation study. We do not claim that PRISCA is better than state-of-the-art estimators. Rather than it can achieve comparable performance as a point estimator while also providing a measure of uncertainty. Notably, it is orders of magnitude faster than competing Bayesian methodologies we tested. 
		
		Table~\ref{tab:sim} reports the summary measures on the credible sets (here, $p=0.9$). The average length is small, providing evidence that the sets are not dispersed and that the posterior distributions of $\gamma_l$ concentrate on few points. The maximum length sets across all datasets are always well below the $50\%$ threshold we use in our detection threshold. We can see that PRISCA's sets do not attain the targeted $p$ level, but they are relatively close. This behavior is largely expected because VB estimates are known to underestimate uncertainty \cite{bishop2006pattern}. While the underestimation is exceptionally severe in some VB applications, it does not seem to be the case here, especially considering the challenging simulation study. In addition, we see that if we have prior knowledge on the number of change points (ora-PRISCA), the coverage improves.
		
		Compared to other Bayesian methods, PRISCA-based methods have average run-times orders of magnitude smaller than MCP and  Fearnhead \cite{fea06}. In this example, PRISCA has lower bias and better recovery of change point locations. MCP's sets are well above the targeted $p$ level, possibly obtained because the sets are much bigger; \textit{e.g.}, almost four times bigger for $T=1000$.
		
		Finally, PRISCA and auto-PRISCA have practically identical performance across all metrics, with auto-PRISCA having just slightly average run times. Recall that auto-PRISCA is essentially parameter-free since the only relevant parameter ($L$) is set through a heuristic. Surprisingly, the knowledge of $K$ used in ora-PRISCA does not lead to dramatic improvement in performance, suggesting that our heuristics to choose $L$ are a viable replacement of a priori knowledge of $K$.

		\subsection{PRISCA extensions}\label{sec:sim_ext}
		
		In this subsectiion, the data generating mechanism, PRISCA's parameters, and the evaluation criteria used are identical to the previous one if not otherwise specified. 
		
		\noindent \textit{Non-Gaussian noise.} We simulate sequences of zero-mean observations experiencing multiple changes in variance. The set-up of the simulation study is almost identical to the one seen in the previous subsection, except that we do not simulate Gaussian random variables. Let $\sigma^2_t$ denote the variance at time $t$, we consider two scenarios: 1) $Y_t \sim 0.9 N(0,\sigma^2_t )+ 0.1 N(0,4\sigma^2_t)$, \textit{i.e.}, we observe a sequence of random variables that are distributed as a mixture of Gaussians, the first identical to the original simulation study, a second one with inflated variance; 2) $Y_t=\sigma_t T_t$ with $T_t \sim t_5$, \textit{i.e.}, we observe a sequence of Student's $t$-distributed random variables with parameter $\nu=5$. We compare the numerical performance of the algorithms seen up to now (PRISCA, PELT, SEGNEI, BINSEG) to a version of PRISCA that approximate the power posteriors (we consider two $\beta$s, $0.8$ and $0.9$) rather than the traditional posterior (we label the two by the corresponding $\beta$, $0.8$-PRISCA, $0.9$-PRISCA). Table~\ref{tab:robust} reports the results. 
		
		PRISCA remains the best point estimator across scenarios. The data are sampled from a Student's $t$, the advantages of using PRISCA as a point estimator are notable in comparison to PELT, SEGNEI, and BINSEG. This is in line with what has been previously observed in the literature \cite{lai2011simple,cap21}; \cite{lai2011simple} include a lengthy discussion on why this could be the case. However, the coverage of the sets for these misspecified sets drops substantially. The fractional version of PRISCA ($0.8$-PRISCA, $0.9$-PRISCA) seems capable of counterbalancing this loss in coverage, even if the coverage remains below the correctly specified scenario. Notably, the better coverage is not achieved through larger credible sets. Both $0.8$-PRISCA and  $0.9$-PRISCA are less accurate as point estimator than PRISCA, but still remains more accurate than competitors.

		\begin{table}\centering 
			\caption{\small{\textbf{Simulation study: robustness to different noise models.} Averages across $100$ repetitions for different sample sizes (T): bias $K-\widehat{K}$ (the lower, the better), Haussdorff statistics $d(\widehat{\mathcal{C}},\mathcal{C}^*)$ (the lower, the better), time (in seconds), average credible set length (length), and coverage of the sets conditional on detection (cond. cov.). PELT, BINSEG, and SEGNEI are point estimators, so it is not possible to give length and cond. cov. PRISCA is the method presented in the manuscript, $0.8$-PRISCA and $0.9$-PRISCA are the fractional variational approximation of the power posterior with $\beta$ equal to $0.8$ and $0.9$ respectively.}} 
			\label{tab:robust} 
			\scalebox{0.76}{	\begin{tabular}{@{\extracolsep{5pt}} cl|ccccc} 
					\\[-1.8ex]\hline 
					& & 	\multicolumn{5}{c}{Mixture of Gaussians} \\
					T & Method & $K-\widehat{K}$ & $d(\widehat{\mathcal{C}},\mathcal{C}^*)$  & Time & Length & Cond. Cov. \\ 
					\hline \\[-1.8ex] 
					200 & BINSEG & 2.32 & 142.99 & 0.01 &   &      \\ 
					& PELT & 2.04 & 130.21 & 0 &   &   \\ 
					&  0.8-PRISCA & 1.78 & 107.39 & 0.01 & 17.98 & 0.71  \\ 
					&  0.9-PRISCA & 1.62 & 98.23 & 0.01 & 17.24 & 0.7  \\ 
					& PRISCA & 1.49 & 92.35 & 0.02 & 16.12 & 0.66  \\ 
					& SEGNEI & 1.68 & 119.96 & 0.28 &   &     \\ 
					\hline
					500 & BINSEG & 3.6 & 283.22 & 0.01 &   &     \\ 
					& PELT & 2.78 & 222.63 & 0 &   &    \\ 
					&  0.8-PRISCA & 2.68 & 188.71 & 0.12 & 26 & 0.71 \\ 
					&  0.9-PRISCA & 2.43 & 171.3 & 0.14 & 23.78 & 0.69  \\ 
					& PRISCA & 2.12 & 159.41 & 0.17 & 22.26 & 0.65  \\ 
					& SEGNEI & 2.79 & 245.08 & 0.36 &   &   \\ 
					\hline
					1000 & BINSEG & 4.41 & 447.02 & 0.01 &   &   \\ 
					& PELT & 3.12 & 337.37 & 0.01 &   &   \\ 
					&  0.8-PRISCA & 3.43 & 296.5 & 1.22 & 30.43 & 0.71  \\  
					&  0.9-PRISCA & 3.1 & 278.5 & 1.32 & 27.23 & 0.67  \\ 
					& PRISCA & 2.67 & 257.82 & 1.55 & 23.17 & 0.62 \\ 
					& SEGNEI & 3.49 & 392.84 & 1.14 &   &    \\ 
					\hline \\[-1.8ex] 
					
					& & 	\multicolumn{5}{c}{Student's $t$ ($\nu=5$)}\\
					\hline \\[-1.8ex] 
					200 & BINSEG   & 1.89 & 111.38 & 0.01 &   &   \\ 
					& PELT    & 1.7 & 100.56 & 0 &   &   \\ 
					&  0.8-PRISCA & 1.31 & 81 & 0.02 & 14.59 & 0.7 \\ 
					&  0.9-PRISCA & 1.17 & 74.67 & 0.02 & 13.45 & 0.65 \\ 
					& PRISCA  0.97 & 67.39 & 0.02 & 12.93 & 0.62 \\ 
					& SEGNEI   & 1.21 & 93.32 & 0.26 &   &   \\ 
					\hline
					500 & BINSEG  & 2.7 & 179.68 & 0.01 &   &   \\ 
					& PELT   & 2.33 & 152.68 & 0 &   &   \\ 
					&  0.8-PRISCA  & 1.83 & 122.25 & 0.19 & 19.3 & 0.73 \\ 
					&  0.9-PRISCA  & 1.55 & 112.24 & 0.22 & 17.76 & 0.67 \\ 
					& PRISCA  & 1.29 & 105.56 & 0.25 & 16.58 & 0.63 \\ 
					& SEGNEI   & 1.92 & 160.98 & 0.38 &   &   \\ 
					\hline
					1000 & BINSEG    & 3.31 & 290.8 & 0.01 &   &   \\ 
					& PELT  & 2.76 & 241.07 & 0.01 &   &   \\ 
					&  0.8-PRISCA  & 2.32 & 203.11 & 1.6 & 24.79 & 0.7 \\  
					&  0.9-PRISCA  & 1.9 & 188.72 & 1.76 & 22.77 & 0.65 \\ 
					& PRISCA  & 1.45 & 172.1 & 1.96 & 22.04 & 0.6 \\ 
					& SEGNEI    & 2.43 & 258.28 & 1.1 &   &  \\ 
					\hline \\[-1.8ex] 
			\end{tabular} }
		\end{table}  
		
		\noindent \textit{Varying mean.} We simulate from a noise process with $K=4$,  $\mathcal{C}=\{.15 \,T,.4 \,T,.75 \,T,.85 \,T\}$, and $\sigma_t$ taking respectively values $1,2,3,0.6,$ and $2$ in the five segments defined by  $\mathcal{C}$. The mean is equal to $f_t=20+12t/T(1-t/T)$ as in \cite{gao19}. We consider three $T$s ($200,500,1000$), and $100$ new simulated data sets per combination. We compare TF-PRISCA to an oracle version of PRISCA (ora-PRISCA) that ``knows" the true mean $f_t$. The oracle is essentially the method for independent ordered observations applied to the sequence $(\epsilon_t)_{t=1:T}$. We also include a version to PRISCA that ignores the shift in mean; hence, it applied directly to $(y_t)_{t=1:T}$. Finally, we include a setting where we preprocess the data, \textit{i.e.}, we estimate $(\hat{f}'_t)_{1:T}$ through trend filtering, ignoring heteroskedasticity, and then apply PRISCA to $(y_t - \hat{f}'_t)_{1:T}$ (we call this pre-PRISCA). The latter allows us to test whether there is any advantage in doing the recursive algorithm with backfitting. The algorithms' parameters are identical ($L=8, a_0=.1)$, and we set $k=2$ in the trend filter. There is no competing method for this task.
		
		Table~\ref{tab:simTF} summarizes the results. As a point estimator, the performance of TF-PRISCA is identical to that of the oracle, which corresponds to the accuracy of PRISCA when there is no varying mean. The coverage of the credible sets is somewhat lower. We see that for $T=200$ the coverage is well below $90\%$. As $T$ grows, the coverage stabilizes around $80\%$. 	The additional iteration in TF-PRISCA leads to a substantial increase in the average run time. The growth cannot be attributable to the trend filter, which has an efficient implementation; instead, it comes from the extra iterations required for the ELBO to stop increasing. 
		
		\begin{table}\centering 
			\caption{\small{\textbf{Simulation study: varying mean.} Averages across $100$ repetitions for different sample sizes (T) and mean $f_t$: bias $K-\widehat{K}$ (the lower, the better), Haussdorff statistics $d(\widehat{\mathcal{C}},\mathcal{C}^*)$ (the lower, the better), time (in seconds), average credible set length (length), and  coverage of the sets conditional on detection (cond. cov.). ora-PRISCA knows $f_t$, PRISCA ignores the varying mean, TF-PRISCA estimates $(\hat{f_t})_{1:T}$ with a trend filtering, pre-PRISCA employs a data preprocessing technique. }}
			\label{tab:simTF} 
			\scalebox{0.8}{\begin{tabular}{@{\extracolsep{5pt}} cl|ccc|cc} 
					\\[-1.8ex]\hline 
					
					T & Method & $K-\widehat{K}$ & $d(\widehat{\mathcal{C}},\mathcal{C}^*)$  & Time & Length & Cond. Cov. \\ 
					\hline \\[-1.8ex] 
					200 & PRISCA & 3 & 169 & 0.11 & 1 & 0 \\ 
					& ora-PRISCA & 0.4 & 32.42 & 0.04 & 8.28 & 0.73 \\ 
					& pre-PRISCA & 3 & 169 & 0.01 & 1 & 0 \\ 
					& TF-PRISCA & 0.16 & 30.44 & 4.09 & 8.4 & 0.63 \\ 
					\hline
					500 & PRISCA & 3 & 424 & 0.11 & 1 & 0 \\ 
					& ora-PRISCA & -0.29 & 12.14 & 0.11 & 14.85 & 0.8 \\ 
					& pre-PRISCA & 3 & 424 & 0.01 & 1 & 0 \\ 
					& TF-PRISCA & -0.35 & 15.61 & 2.45 & 14.01 & 0.74 \\ 
					\hline
					1000 & PRISCA & 3 & 849 & 0.13 & 1 & 0 \\ 
					& ora-PRISCA & -0.34 & 9.74 & 0.22 & 16.72 & 0.81 \\ 
					& pre-PRISCA & 3 & 849 & 0.03 & 1 & 0 \\ 
					& TF-PRISCA & -0.38 & 9.5 & 5.02 & 15.69 & 0.78 \\ 
					
					\hline \\[-1.8ex] 
			\end{tabular} }
		\end{table}

		\noindent \textit{Autoregressive noise.} We consider an AR process and test AR-PRISCA. We set $r=1$ and use the same noise process of the previous section ($K=4$,  $\mathcal{C}=\{.15 \,T,.4 \,T,.75 \,T,.85 \,T\}$,  $\sigma_t \in \{1,2,3,0.6,2\}$). We consider three $T$s ($200,500,1000$) and $\phi$s ($0.4,0.6,0.8$) and $100$ new simulated data sets per combination. We compare  AR-PRISCA to an oracle version of PRISCA (ora-PRISCA) that ``knows" the true parameter $\phi$. The oracle is essentially the method for independent ordered observations applied to the sequence $(\epsilon_t)_{t=1:T}$. We also include a version to PRISCA that ignores the dependence; hence, applied directly to $(y_t)_{t=1:T}$. Lastly, we assume that the order $r=1$ is known. For unknown $r$, one can resort to standard times-series techniques, such as selecting the order through information criteria.

		Table~\ref{tab:simAR} summarizes the result. Notably, the performance of AR-PRISCA is practically identical to ora-PRISCA across all metrics (except time), suggesting that the method can correctly account for autoregression. PRISCA gives a snapshot of the performance of an approach that ignores autoregression. As a point estimator, the method remains relatively robust for sample sizes and small $\phi$, with an empirical performance not too far off from the other two. The point estimates become substantially less accurate at $T$ and $\phi$ increase. Ignoring the autoregression leads to inaccurate credible sets with low coverage. Interestingly, the average run-time of AR-PRISCA is not overly inflated by the extra step. AR-PRISCA run-time is sometimes even lower than PRISCA's because it is more challenging to maximize the ELBO when the model is misspecified.

		\begin{table} \centering 
			\caption{\small{\textbf{Simulation study: AR process.} Averages across $100$ repetitions for different sample sizes (T) and autoregressive coefficient ($\phi$): bias $K-\widehat{K}$ (the lower, the better), Hausdorff statistics $d(\widehat{\mathcal{C}},\mathcal{C}^*)$ (the lower, the better), time (in seconds), average credible set length (length), and coverage of the sets conditional on detection (cond. cov.). ora-PRISCA knows $\phi$, PRISCA ignores the AR process, AR-PRISCA estimates $\hat{\phi}$.}}
			\label{tab:simAR} 
			\scalebox{0.78}{\begin{tabular}{@{\extracolsep{5pt}} lcl|ccc|cc} 
					\\[-1.8ex]\hline 
					
					$\phi$ 	& T & Method & $K-\widehat{K}$ & $d(\widehat{\mathcal{C}},\mathcal{C}^*)$  & Time & Length & Cond. Cov. \\ 
					\hline \\[-1.8ex] 
					$0.4$ & 200 & AR-PRISCA & 0.44 & 33.71 & 0.07 & 8.27 & 0.74 \\ 
					&  & PRISCA & 0.38 & 35.22 & 0.15 & 8.18 & 0.7 \\ 
					&  & ora-PRISCA & 0.41 & 32.1 & 0.04 & 8.32 & 0.73 \\ 
					& 500 & AR-PRISCA & -0.25 & 11.39 & 0.14 & 14.79 & 0.81 \\ 
					&  & PRISCA & -0.39 & 22.53 & 0.22 & 14.69 & 0.72 \\ 
					&  & ora-PRISCA & -0.29 & 12.14 & 0.1 & 14.85 & 0.8 \\ 
					& 1000 & AR-PRISCA & -0.32 & 10.03 & 0.27 & 16.68 & 0.81 \\ 
					& & PRISCA & -0.61 & 13.56 & 0.32 & 17.7 & 0.73 \\ 
					&  & ora-PRISCA & -0.34 & 9.74 & 0.2 & 16.72 & 0.81 \\ 
					\hline
					$0.6$ & 200 & AR-PRISCA & 0.39 & 34.11 & 0.07 & 8.2 & 0.72 \\ 
					&  & PRISCA & 0.11 & 34.86 & 0.15 & 8.55 & 0.61 \\ 
					&  & ora-PRISCA & 0.41 & 32.6 & 0.04 & 8.29 & 0.72 \\ 
					& 500  & AR-PRISCA & -0.27 & 12.79 & 0.15 & 14.52 & 0.8 \\ 
					& & PRISCA & -0.87 & 30.66 & 0.22 & 14.08 & 0.53 \\ 
					& & ora-PRISCA & -0.29 & 12.14 & 0.1 & 14.85 & 0.8 \\ 
					&1000 & AR-PRISCA & -0.35 & 10.19 & 0.27 & 16.77 & 0.8 \\ 
					& & PRISCA & -1.25 & 31.13 & 0.32 & 18.31 & 0.57 \\ 
					&  & ora-PRISCA & -0.34 & 9.74 & 0.2 & 16.72 & 0.81 \\ 
					\hline
					$0.8$  & 200 & AR-PRISCA & 0.43 & 34.06 & 0.07 & 8.21 & 0.73 \\ 
					&  & PRISCA & -0.09 & 36.34 & 0.15 & 7.87 & 0.45 \\ 
					&  & ora-PRISCA & 0.41 & 32.36 & 0.04 & 8.32 & 0.73 \\ 
					& 500 & AR-PRISCA & -0.25 & 12.88 & 0.15 & 14.4 & 0.81 \\ 
					&  & PRISCA & -1.82 & 49.78 & 0.22 & 11.08 & 0.3 \\ 
					&  & ora-PRISCA & -0.29 & 12.14 & 0.1 & 14.85 & 0.8 \\ 
					& 1000 & AR-PRISCA & -0.37 & 10.42 & 0.27 & 16.77 & 0.8 \\ 
					&  & PRISCA & -2.24 & 71.95 & 0.33 & 14.96 & 0.3 \\ 
					&  & ora-PRISCA & -0.34 & 9.74 & 0.2 & 16.72 & 0.81 \\ 
					\hline \\[-1.8ex] 
			\end{tabular} }
		\end{table}

		\begin{figure}[!t]
			\begin{center}
				\includegraphics[scale=0.5]{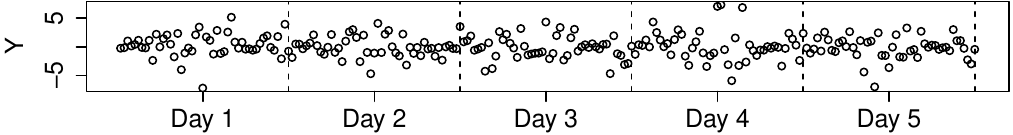}
				\includegraphics[width=.4\textwidth]{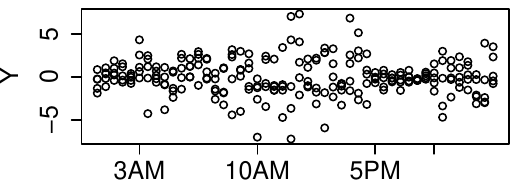}
			\end{center}
			\caption{\small{\textbf{Simulation study: periodic data.}. Top: observations are simulated every 30 mins for five days. There are four daily change points: 3AM, 10AM,5PM, and 8:30PM. Bottom: same as top panel. Here, they are not sorted chronologically but plotted according to the time they are collected.}}
			\label{fig:cycle1}
		\end{figure}

		\noindent \textit{Periodic data.} PRISCA can naturally handle multiple observations at a given time point. In Section~\ref{sec:ext}, we argued that this flexibility is essential because it allows PRISCA to handle several realistic scenarios. Among others, the case of cyclical data with periodic change points. For example, Figure~\ref{fig:cycle1} top panel depicts a simulated data set mimicking observations recorded every half hour for five days ($T=240$). There are four daily change points, at 3AM, 10AM, 5PM, and 8:30PM, but none is visible. Figure~\ref{fig:cycle1} bottom panel depicts the same data set, now plotted according to the time at which the observations are simulated/collected. The change points are now clearly visible. PRISCA naturally handles this data, treating observations collected at the same time as $i.i.d.$.

		We simulate cyclical data as those in Figure~\ref{fig:cycle1}. We have  $K=4$ and $\sigma_t$ taking values $1,2,3,0.6,$ and $2$ in the five segments defined by  the times (same as previous sections). We consider four  number of days collected $D$s ($3,5,10,20$), and $100$ simulated data sets each. Table~\ref{tab:simcycle} summarizes the results. Accuracy increases with sample size (each new day is $48$ more observations). With $20$ days, PRISCA nearly perfectly recovers all the change points. Interestingly, also the credible sets have perfect coverage in this simulation study.

		\noindent \textit{Non-conjugate prior.}
		We replicate the main simulation study and consider different prior distributions on the variance parameter. Table~\ref{tab:sim_nonconju} reports the results. The results on numerical accuracy are essentially identical across the three proposals. The identical performance of PRISCA-Gam-Num and PRISCA suggests that the two approximations introduced (marginal likelihood, posterior expectation) do not substantially affect the performance. Comparing PRISCA to PRISCA-Half-Cauchy ($\xi=25$ as in \cite{gelman2006prior}) suggests that, in this specific setting, the prior on the variance does not seem to have an effect. 
		
		\begin{table}\centering 
			\caption{\small{\textbf{Simulation study: non-conjugate priors.} Averages across $100$ repetitions for different sample sizes (T) and mean $f_t$: bias $K-\widehat{K}$ (the lower, the better), Haussdorff statistics $d(\widehat{\mathcal{C}},\mathcal{C}^*)$ (the lower, the better), time (in seconds), average credible set length (length), and coverage of the sets conditional on detection (cond. cov.). PRISCA is the baseline proposal (Algorithm 1), PRISCA-Gam-Num has a Gamma prior on the precision but approximates the marginal likelihood,  PRISCA-Half-Cauchy has a Half-Cauchy prior on the standard deviation ($\xi=25$).} }
			\label{tab:sim_nonconju} 
			\scalebox{0.75}{\begin{tabular}{@{\extracolsep{5pt}} cl|ccc|cc} 
					\\[-1.8ex]\hline 
					
					T & Method & $K-\widehat{K}$ & $d(\widehat{\mathcal{C}},\mathcal{C}^*)$  & Time & Length & Cond. Cov. \\ 
					\hline \\[-1.8ex] 
					200 & PRISCA & 1.49 & 79.48 & 0.01 & 13.33 & 0.82 \\ 
					& PRISCA-Gam-Num & 1.47 & 78.65 & 0.02 & 13.38 & 0.82 \\ 
					& PRISCA-Half-Cauchy & 1.49 & 80.47 & 1.11 & 13.19 & 0.82 \\ 
					\hline
					500 & PRISCA & 2.02 & 124.96 & 0.19 & 18.68 & 0.84 \\ 
					& PRISCA-Gam-Num & 2 & 125.22 & 0.2 & 18.98 & 0.83 \\ 
					& PRISCA-HalfCauchy & 2.02 & 125.11 & 15.33 & 19.03 & 0.84 \\ 
					\hline
					1000 & PRISCA & 2.55 & 200.83 & 1.86 & 23.91 & 0.86 \\ 
					& PRISCA-Gam-Num & 2.55 & 202.18 & 1.85 & 24.32 & 0.86 \\ 
					& PRISCA-Half-Cauchy & 2.53 & 200.4 & 113.82 & 23.93 & 0.85 \\ 
					\hline \\[-1.8ex] 
			\end{tabular} }
		\end{table}

		\begin{table} \centering 
			\caption{\small{\textbf{Simulation study: periodic data.} Averages across $100$ repetitions for different numbers of days (D): bias $K-\widehat{K}$ (the lower, the better), Hausdorff statistics $d(\widehat{\mathcal{C}},\mathcal{C}^*)$ (the lower, the better), time (in seconds), average credible set length (length),  and  coverage of the sets conditional on detection (cond. cov.). }}
			\label{tab:simcycle} 
			\scalebox{0.8}{\begin{tabular}{@{\extracolsep{5pt}} c|c cc|cc} 
					\\[-1.8ex]\hline 
					$\#$ days	&  $K-\widehat{K}$ & $d(\widehat{\mathcal{C}},\mathcal{C}^*)$  & Time & Length & Cond. Cov. \\ 
					\hline \\[-1.8ex] 
					3 & 0.76 & 10.45 & 0.15 & 2.8  & 0.93 \\ 5 & 0.23 & 5.99 & 0.15 & 2.24  & 0.94 \\ 10 & -0.24 & 1.11 & 0.16 & 1.89  & 0.92 \\ 20 & -0.23 & 0.44 & 0.17 & 1.37  & 0.94 \\ 
					\hline \\[-1.8ex] 
			\end{tabular} }
		\end{table}  
		
		\subsection{Robustness check}\label{sec:sim_robu}

		\noindent \textit{Comparison $p(\gamma_l,\tau_l |\overline{\bfr}_l)$ and $p(\gamma_l,\tau_l |\bfy)$.} PRISCA is a recursion that approximates the posterior distribution via a set of solutions to single-change point models fed with the model residuals $\overline{\bfr}_l$ rather than the true observations. As a heuristics, building the algorithm, we hint that $p(\gamma_l,\tau_l |\overline{\bfr}_l)$ could be approximating the actual marginal posterior distribution $p(\gamma_l,\tau_l |\bfy)$. In the case $L=1$, the two quantities are indeed identical. 
		
		A reviewer suggested justifying the approximation in the single change point scenario, where the exact posterior can be computed in closed form (see Section~\ref{sec:single}). We simulate $100$ data sets with $T=200$, $K=1$, $\sigma_l=1$ and for each data set with sample $\sigma_r\sim \mathcal{U}(0.3,1.6)$ and $t_0 \sim \mathcal{U}_d\{50,\ldots, 150\}$, with $\mathcal{U}_d$ denoting the uniform discrete on a given set. For each dataset, we compute the exact posterior distribution and the corresponding $\hat{t}$ and $	\mathcal{CS}(\bfalpha, p=0.9)$. We also fit PRISCA with a varying number of model blocks  ($L=2$, $L=3$, and $L=4$) (the other parameters as previously specified). We also compute $\hat{K}$ and the corresponding change points estimates and credible sets. 
		
		For all datasets, PRISCA gives the same solution as the single change point model. Whenever a change point is detected by the exact model, the credible sets and the point estimates are identical between the exact model and the three PRISCA fits. In addition we compute the discrete KL divergence between $p(\gamma_l,\tau_l |\bfy)$ and $p(\gamma_l,\tau_l |\overline{r}_l)$  for the $l$th component that detects a change points. Again, the KL is equal to zero for all datasets. This simulation study has clear limitations, starting with the fact that $K=1$. As $K$ increases, the approximation error is likely going to increase. That is also fully understood by the fact that we are doing a variational approximation of the true model posterior. However, this simulation study provides further support for the good empirical performance we observed in the simulation study. 
		
		\noindent	\textit{Overlapping sets.} In Section~\ref{sec:prisca_algo}, we claimed that, while it is theoretically possible that two vectors $\bfalpha_{l_1}$ and $\bfalpha_{l_2}$ capture the same time instance, we observe such behavior only within the first iteration of the algorithm, to then disappear as we run the procedure until convergence. Regardless, we included a postprocessing step in the \texttt{R} package PRISCA that eliminates a set (and the related point estimate) if it overlaps with another one. This simulation study provides empirical proof that this step is rarely needed if the algorithm runs until convergence. We repeat the simulation study in Section~\ref{sec:sim}, comparing the PRISCA's performance with (PRISCA) or without (PRISCA-NP) the postprocessing step. 
		
		Table~\ref{tab:overlap} reports only results ($\#$ estimated change points $\hat{K}$, Haussdorff statistics $d(\widehat{\mathcal{C}},\mathcal{C}^*)$) for three stopping rule $\epsilon$ of the ELBO's convergence. We include only $T=200$ as the pattern is identical across the three sample sizes. As one can see, as the convergence criterion decreases, $\hat{K}$ becomes quickly identical for the two algorithms. Change points locations recovery ($d(\widehat{\mathcal{C}},\mathcal{C}^*)$) is always the same. This is important because it rules out the possibility that the sets defined by $\bfalpha_{l_1}$ and $\bfalpha_{l_2}$ could be slightly different.
		
		\begin{table}\centering 
			\caption{\small{\textbf{Simulation study: overlapping sets.} Averages across $300$ repetitions for $T=200$ for three ELBO's stopping rule ($\epsilon$): estimated number of change points $\widehat{K}$ (the lower, the better), Haussdorff statistics $d(\widehat{\mathcal{C}},\mathcal{C}^*)$ (the lower, the better), time (in seconds). PRISCA removes overlapping sets, PRISCA-NO does not do postprocessing.} }
			\label{tab:overlap} 
			\scalebox{0.8}{\begin{tabular}{@{\extracolsep{5pt}} cl|cc} 
					\\[-1.8ex]\hline 
					
					$\epsilon$ & Method & $\widehat{K}$ & $d(\widehat{\mathcal{C}},\mathcal{C}^*)$   \\ 
					\hline \\[-1.8ex] 
					$10^{-1}$ & PRISCA & 1.51 & 79.48 \\ 
					& PRISCA-NO & 1.71 & 79.2 \\ 
					$10^{-3}$ & PRISCA & 1.63 & 73.57 \\ 
					& PRISCA-NO & 1.66 & 73.49 \\ 
					$10^{-5}$  & PRISCA & 1.65 & 72.49 \\ 
					& PRISCA-NO & 1.65 & 72.49 \\ 
					
					\hline \\[-1.8ex] 
			\end{tabular} }
		\end{table}

		\noindent \textit{Sensitivity to parameter $a_0$.} Table~\ref{tab:a0} summarizes the results of the simulation study where we run PRISCA for various value of $a_0$($10^{-1}$,$10^{-3}$,$10^{-5}$,$10^{-9}$). The results are completely unaffected by the choice of this parameter irrespective of the sample size. 
		
		\begin{table} \centering 
			\caption{\small{\textbf{Simulation study: sensitivity to $a_0$.} Averages across $300$ repetitions for different sample sizes (T): bias $K-\widehat{K}$ (the lower, the better), Hausdorff statistics $d(\widehat{\mathcal{C}},\mathcal{C}^*)$ (the lower, the better), time (in seconds), average credible set length (length), and coverage of the sets conditional on detection (cond. cov.). The algorithm is always PRISCA, run for different level of $a_0$.}}
			\label{tab:a0} 
			\scalebox{0.8}{\begin{tabular}{@{\extracolsep{5pt}} cc|ccc|cc} 
					\\[-1.8ex]\hline 
					
					T & $a_0$ & $K-\widehat{K}$ & $d(\widehat{\mathcal{C}},\mathcal{C}^*)$  & Time & Length & Cond. Cov. \\ 
					\hline \\[-1.8ex] 
					200 & $10^{-3}$ & 1.49 & 79.48 & 0.01 & 13.34 & 0.85 \\ 
					& $10^{-1}$ & 1.51 & 80.42 & 0.01 & 13.29 & 0.85 \\ 
					& $10^{-5}$ & 1.49 & 79.48 & 0.01 & 13.33 & 0.85 \\ 
					& $10^{-9}$ & 1.49 & 79.48 & 0.01 & 13.33 & 0.85 \\ 
					\hline
					500 & $10^{-3}$& 2.02 & 124.96 & 0.17 & 18.75 & 0.87 \\ 
					& $10^{-1}$& 2.03 & 124.97 & 0.17 & 19.1 & 0.88 \\ 
					& $10^{-5}$ & 2.02 & 124.96 & 0.17 & 18.68 & 0.87 \\ 
					& $10^{-9}$ & 2.02 & 124.96 & 0.17 & 18.68 & 0.87 \\ 
					\hline
					1000 & $10^{-3}$ & 2.55 & 201.51 & 1.67 & 23.92 & 0.89 \\ 
					& $10^{-1}$ & 2.59 & 202.24 & 1.63 & 23.83 & 0.89 \\ 
					& $10^{-5}$& 2.55 & 200.83 & 1.67 & 23.91 & 0.88 \\ 
					& $10^{-9}$ & 2.55 & 200.83 & 1.68 & 23.91 & 0.88 \\ 
			\end{tabular} }
		\end{table}

		\section{Real data}\label{sec:real}
		
		\subsection{Liver viability assessment} \label{sec:liver}
		
		
		\begin{figure}[!t]
			\begin{center}
				\includegraphics[scale=0.5]{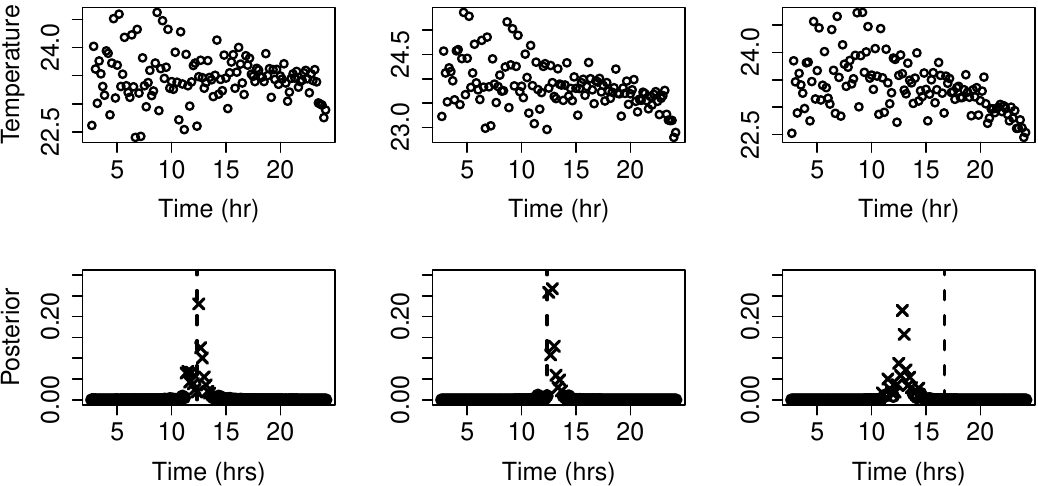}
			\end{center}
			\caption{\small{\textbf{Liver procurement data.} Top row: liver temperature profile at three random locations (temperature every $10$ min. for $24$hrs, first $2.5$hrs not included, $T=130$). Bottom row: posterior distribution of the change locations. Dots depicts the probability of a change point at that location. Crosses denote the points included in a $p=90\%$ credible set. Dashed lines correspond to point estimates of \cite{gao19}' method. }}
			\label{fig:liver}
		\end{figure}

		It has been observed that donor risk is increasing, with a majority of organs classified as marginal. Viability assessment in organ transplantation is a challenging task of utmost importance \cite{panconesi2021viability}. 
		Current techniques are either invasive, \textit{e.g.}, a biopsy of the organ, or highly subjective, \textit{e.g.}, they rely on the physician's judgment of the candidate organs. The former strategy runs the risk of ruining the organ, and the latter is too subjective, given the high number of parameters that must be monitored to assess viability. Much research focuses on replacing these two approaches. Machine perfusion technology have emerged as an alternative in solid organ transplantation and it is an area currently subject of research efforts \cite{friend2020strategies}.

		Here, we analyze experimental data studying a novel dynamic organ preservation strategy, which consists in monitoring the whole organ surface temperature of a liver perfused with a physiologic perfusion fluid (modified Krebs' solution) \cite{gao19}. The experiment records the surface temperatures of a lobe of a porcine liver at $36795$ spots every 10 minutes for $24$ hours. Each temperature profile consists of $T=130$ points because the first $2.5$ hours of data were discarded (it takes about two hours for the perfusion fluid to infuse and stabilize the liver completely). We have data for $100$ random locations. Figure~\ref{fig:liver} top panels depict temperature profiles at three random locations of the liver. A visual inspection suggests a time-varying mean, domain knowledge indicates one change in variance.  The method in  \cite{gao19} was designed for this task.

		To account for the trend, we employ TF-PRISCA (the extension of PRISCA including the trend filter to account for a change in means). We could similarly account for the trend simply by using the first-order difference of the temperatures. Results do not change.  The choice is motivated by the fact that we compare our results with the method from \cite{gao19}. PRISCA's parameters were set to $a_0=.001$, $\epsilon=10^{-5}$, and $L=2$ (one effect for the unknown baseline variance and a second one for the single change point). For the mean, we used the \texttt{glmgen} implementation of the trend filter \cite{tib14}, with $k=2$, we employed a weighted least square solution, and $\lambda$ chosen via BIC. Figure~\ref{fig:liver} bottom panels depict the posterior distribution of $\bfgamma$ at three locations obtained with TF-PRISCA along with the estimates of \cite{gao19} methodology. Crosses depict $p=0.9$ credible sets, dashed lines the point estimates given by \cite{gao19}. Change points from $t=1$ to $t=5$ were excluded from the plots because they refer to the baseline $\sigma^2$ (see also Figure~\ref{fig:example_multi}). 
		
		In two locations, \cite{gao19}' points estimates are in TF-PRISCA credible sets. TF-PRISCA point estimates are off by one instance in these two cases. In the third location (last column), the point estimate of \cite{gao19} is not included in the credible set. A visual inspection supports a change in variance in both locations. We note that if we were to run PRISCA with an extra effect (\textit{i.e.}, $L=3$), we would also recover this second change point. However, the experimental design suggests the existence of a single change. 
		
		The method in \cite{gao19} applied to the $100$ available locations estimate the change points to happen around around $12$ hours $84\%$ of the time, with the remaining $16\%$ estimated at $t=76$. The rightmost panel depicts an example of one of the instances where the change is estimated around $14$ hours. In the latter locations, our method points out that there is evidence for a loss of viability at around $12$ hours as well, hinting that loss of viability was uniform in the different locations.

		\subsection{Oceanographic data}\label{sec:oce}
		
		\begin{figure}[!t]
			\begin{center}
				\includegraphics[scale=0.5]{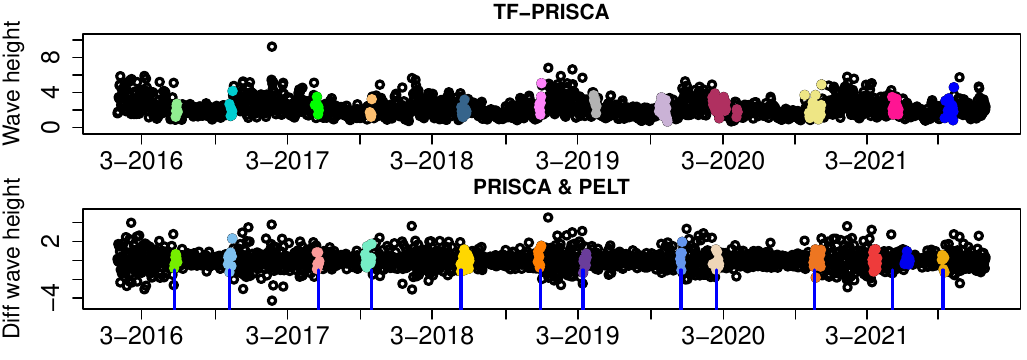}
			\end{center}
			\caption{\small{\textbf{Oceanographic data.} Top panel: daily ($T=2164)$ wave height (ft) recorded by buoy 46042 Monterey (NOOA data) in 2016-2021. Colors depict the time locations in the credible sets ($p=0.9$) constructed by TF-PRISCA. Bottom panel: first-order difference of data in the top panel. Colors depict the time locations in the credible sets ($p=0.9$) constructed by PRISCA. Vertical blue segments depict PELT's point estimates.}}
			\label{fig:ocean}
		\end{figure}
		
		Practitioners who plan maintenance at offshore infrastructures, such as oil rigs and wind farms, study wave height volatility. A low wave height volatility suggests stable sea conditions, which is necessary to minimize risks when organizing repairs and inspections offshore. The application of change point detection methods for this task has been studied in  \cite{killick2010detection}.
		
		We consider wave height data collected hourly from January 2016 to December 2021 by Station 46042 Monterey, a buoy located twenty-seven miles off the coast of Monterey in the Pacific Ocean. The National Oceanic and Atmospheric Administration  National Data Buoy center makes the data publicly available. Figure~\ref{fig:ocean} depicts the sequence of observations subsampled such that we have one measure a day. As noted by \cite{kil12}, the cyclical (seasonal) nature of wave heights is clear, what's less obvious is the location of the changes in volatility, which is the main object of interest \cite{killick2010detection}. We apply PRISCA to estimate variance change points to first-order difference as done by Killick et al. \cite{killick2012optimal}. While first-order difference do a reasonable job in removing the seasonal trend (Figure~\ref{fig:ocean} bottom panel), we also apply TF-PRISCA to the raw data.
		
		Figure~\ref{fig:ocean} depicts the change points locations obtained by TF-PRISCA (top panel) and PRISCA (bottom panel) ($a_0=.001$, $L=30$, $\epsilon=10^{-5})$, its credible sets ($p=0.9$) along with PELT (point estimates, blue line). Rather than reporting the point estimates of PRISCA, we depict the credible sets coloring the time points. The segmentation of the two methods is practically identical, with two change points per year, one in autumn and one in spring. The methods differ in 2021, when PRISCA identifies an additional change point, with an extra segmentation in spring, while PELT estimates a single change point right in the middle of this segment. A visual inspection of the data suggests the possibility of an additional change point. The estimates of TF-PRISCA are identical to those of PELT. Our method reports confidence intervals as well, the average length of the credible sets is $10.7$ days.

		\section{Discussion}\label{sec:disc}
		
		In this article, we presented PRISCA, a computationally efficient Bayesian method to detect multiple changes in variances. As a point estimator, our proposal is as accurate, at times more accurate, than competitors. In addition, it provides credible sets to quantify estimation uncertainty that alternatives do not offer. While Bayesian methods naturally provide such a measure of uncertainty, PRISCA is orders of magnitude faster than existing Bayesian methods; it is more accurate and requires no MCMC. We also show that PRISCA has theoretical guarantees in terms of localization rate that competing Bayesian methods lack. 
		
		The paper contributes to making Bayesian change point methodologies more usable in applications. Practitioners do not commonly employ Bayesian change point methodologies because they are inherently slow, require extensive parameter tuning, and have no theoretical guarantees. Here, we address several of these issues for a specific type of distributional change, changes in the variance parameter of a Gaussian sequence model. \cite{wang2020simple} complement our proposal dealing with piecewise changes in the Gaussian mean. The two methods share the same rationale: solving the more complex multiple change point problem by recursively tackling the more accessible single change point case. The difference lies in that \cite{wang2020simple} sum together multiple single change point models, while we take a product of them. The reason is that they deal with a location parameter while we deal with a scale one. 	Nevertheless, the two works represent a template to generalize these methods beyond the Gaussian setting.
		
		The construction of PRISCA is driven by the need to develop a scalable algorithm for inference. To do that, several features that are part of standard methods have been dropped, \textit{e.g.} a minimum spacing condition. A naive modification of Algorithm~\ref{alg:prisca_fit} where one fits single change point models conditioning on the locations of the other estimates to enforce constraints will not work. To start, the algorithm would become heavily order-dependent -- early mistakes may have an impact on the final inference -- while our symmetric model together with backfitting may erase mistakes made early on. In addition, the characterization of PRISCA as a VB algorithm will cease to hold. An important direction of future work is how to incorporate these features while maintaining the good properties of the algorithm.
		
		In the paper, we proved that our point estimator is consistent  in the single change point problem. Such a result is the cornerstone to establish multiple change points results for frequentist estimators, with the technique of the proof depending on how the generalization to multiple change points is done; \textit{e.g.}, through binary segmentation or dynamic programming. This does not seem the case for Bayesian estimators, as, in the multiple change points problem, one essentially defines a probability distribution on the space of partitions. Marginal likelihoods cannot be computed in closed form, making it difficult to study the theoretical properties. This challenge can explain the relatively small literature on theoretical results for Bayesian change point methodologies.

\newpage
\begin{center}
	{\large\bf SUPPLEMENTARY MATERIAL}
\end{center}

\section{Auxiliary derivations of the single change point model in Section 2}  \label{sec:single_details}

We derive the posterior distribution parameters in Equations (2)--(4). We have to compute $\mathbb{P}(\bfgamma, \bftau| \bfy)=\mathbb{P}(\bftau| \bfy, \bfgamma) \mathbb{P}(\bfgamma| \bfy)$, where we omit the conditioning on parameters $\bfpi$, $a_0$ and $\sigma^2$ to simplify the notation. Conditionally on given change point location $\bfgamma$, say $\gamma_t=1$, inferring $s$ (recall $\bftau| \gamma_t=1$ is a vector with $1$s in the first $t-1$ positions, and all entries equal to $s$ in the remaining $T-t$) corresponds to  a standard Bayesian inference of the precision parameter with a Gamma prior; for background see \cite{gelman2013bayesian}. The posterior $\mathbb{P}(s^2|\gamma_t=1,\bfy) = \text{Gamma}(a_t,b_t)$, with parameters given in Equation (4) in the manuscript and below for completeness
\begin{equation*}\label{SMeq:posttau}
	a_t =a_0 + \frac{T-t+1}{2} \,\,\,\, \text{and} \,\,\,\, b_t=a_0+ \frac{\bfy_{t:T}^T \bfy_{t:T}}{2 \sigma^2}.
\end{equation*}
Now, $\bfgamma$ is a discrete random vector of length $T$ having one non-zero entry; \textit{i.e.},  $\mathbb{P}(\bfgamma| \bfy)$ is a Multinomial distribution with parameters $(1,\bfalpha)$, where $\bfalpha$ is a vector with entries $\alpha_t:=\mathbb{P}(\gamma_t=1|\bfy)$ and $\sum_{t=1}^T \alpha_t=1$. This can be computed as 
\begin{equation}\label{eq:bayesAlpha}
 \mathbb{P}(\gamma_t=1|\bfy) \propto \mathbb{P}(\bfy|\gamma_t=1) \mathbb{P}(\gamma_t=1) = \mathbb{P}(\bfy|\gamma_t=1) \pi_t, 
\end{equation}
where
 \[
\begin{array}{lll}
	\mathbb{P}\left(   \bfy |    \gamma_t = 1 \right) &=& \displaystyle   \int  	\mathbb{P}\left(   \bfy |    \gamma_t = 1,s \right) dP(s^2 |a_0 )\\
	& \propto& \displaystyle \int    \vert \Gamma \vert^{-1/2} \exp\left(   -\frac{1}{2}    \bfy^{\top } \Sigma^{-1} \bfy
	 \right) \text{Gamma}(s^2 |a_0,b_0) d s^2\\
	 
	 & \propto & \exp\bigg(   -\frac{1}{2\sigma^2} \sum_{j=1}^{t-1} y_j^2   \bigg)
	 \frac{\Gamma\big(  \frac{T-t+1}{2}+a_0 \big)}{\big(  b_0  +   \frac{1}{2\sigma^2} \sum_{j=t}^{T} y_j^2\big)^{ \frac{T-t+1}{2}+a_0  }} \nonumber
\end{array}
\]
where
\[
\Sigma \,=\, \left(  \begin{matrix}
	\sigma^2 I_{t-1} &0\\
	0&   \frac{\sigma^2}{s^2} I_{T-t+1}
\end{matrix}\right),
\]
where the proportionality constant above are also available in closed form but unnecessary when computing $\mathbb{P}(\gamma_t=1|\bfy)$ because they cancel with normalizing constant of \eqref{eq:bayesAlpha}, which is simply $\sum_{t=1}^T \mathbb{P}(\bfy|\gamma_t=1) \pi_t$. Putting everything together we get $(2)-(3)$ in the manuscript
\begin{align*}\label{eq:posterior}
	\bfgamma|\bfy &\sim  \text{Multinomial}(1,\bfalpha), \nonumber \\
		\alpha_t &= \frac{P(\bfy | \gamma_t=1,) \pi_t}{\sum_{\pi_j} P(\bfy | \gamma_j=1) \pi_j}.
\end{align*}

Given $\mathbb{P}(\bfgamma, \bftau| \bfy)$, we can compute posterior quantities. A crucial one for the multiple change point algorithm is $\overline{\bftau}^2:=E[\bftau^2|\bfy]= (E[\tau_1^2|\bfy],\ldots,E[\tau_T^2|\bfy])$ given in Equation $(7)$. Here
\[E[\tau^2_t|\bfy] = \int \tau^2_t  d P (\tau^2_t |\bfy)  =  \sum_{i=1}^t \left[ \int s^2 \mathbb{P} (s^2 |\bfy,\gamma_i=1) d s^2 \right] \mathbb{P}(\gamma_i=1|\bfy) +1 -\sum_{i=1}^{t}\mathbb{P}(\gamma_i=1|\bfy), \] 
where the above accounts for the fact that $\tau_t$ could be equal to $s$ or to $1$ depending on $\bfgamma$.

In the rest of the proof, we will simply consider the case $\pi_t = 1/T$ for $t=1,\ldots,T$. This means that $ \mathbb{P}(\gamma_t=1|\bfy) $ and $\mathbb{P}(\bfy|\gamma_t=1) $  are proportional to each other. Hence, in the rest of the proof we can simply work with the latter.  Finally, we emphasize that the general case $ c_{\min}/T \leq \pi_t \leq c_{\max}/T$ can be handled similarly, but to keep the notation simpler we focus on the case $\pi_t = 1/T$ for $t=1,\ldots,T$.


\section{Auxiliary lemmas for proof of  Theorem 1}
\begin{definition}
\label{def1}
A process $(X_t,t\in{\mathbb{Z}})$ is said to be $\alpha$-mixing if 
$$ \,\underset{k\rightarrow \infty}{\lim }\alpha_k \,    =\,
0,$$
where 
\[
\alpha_k( \{X_t\}_{t\in \mathbb{Z}}  )   \,:=\,  \, \sup_{t\in{\mathbb{Z}}}\alpha(\sigma(X_s,s\le t),\sigma(X_s,s\ge t+k))   .
\]
Here  the strong  mixing, or $\alpha$-mixing coefficient between two $\sigma$-fields $\mathcal{A}$ and $\mathcal{B}$ is defined as
$$
\alpha(\mathcal{A}, \mathcal{B})=\sup _{A \in \mathcal{A}, B \in \mathcal{B}}|\mathbb{P}(A \cap B)-\mathbb{P}(A) \mathbb{P}(B)| .
$$
We often simply write $\alpha_k$  to indicate $\alpha_k( \{X_t\}_{t\in \mathbb{Z}}  ) $.
\end{definition}


\begin{lemma}
	\label{lemma1}
		\textbf{[Lemma 3 in   \cite{madrid2024change} ] }
	Let $\nu>0$ be given. Suppose $\Big\{X_i\Big\}_{i=1}^{\infty}$ is a stationary $\alpha$-mixing time series with mixing coefficient $\alpha_k$ and that $\mathbb{E}(X_i)=0$.  	Suppose that there exists $\delta, \Delta>0$ such that
	$$
	\underset{i \in \mathbb{N}}{\sup}	\,\mathbb{E}\Big(\Big|X_i\Big|^{2+\delta+\Delta}\Big) \leq D_1
	$$
	and
	$$
	\sum_{k=0}^{\infty}(k+1)^{\delta / 2} \alpha_k^{\Delta /(2+\delta+\Delta)} \leq D_2 .
	$$
	Then for any $0<a<1$ it holds that
	$$
	\mathbb{P}\Big(\Big|\sum_{i=1}^r X_i\Big| \leq \frac{C}{a} \sqrt{r}\{\log (r \nu)+1\} \text { for all } r \geq 1 / \nu\Big) \geq 1-a^2,
	$$
	where $C$ is some absolute constant.
\end{lemma}

\begin{assumption} \label{ass1_s}
	Let $t_0$ be the time instance such that $Y_{t} \sim  F_1$ for $t \geq t_0$ and $Y_{t} \sim F_0$ for $t < t_0$, and let $\tau^2=\sigma^2_l/\sigma^2_r$ where  $\sigma^2_l:= \mathbb{E}(Y_1^2)$ and $\sigma^2_r :=  \mathbb{E}(Y_{t_0+1}^2) $.  We assume that $\{Y_t, t \in  \mathbb{N}\}$ is $\alpha $-mixing with coefficients satisfies $\alpha_k \,\leq \, e^{-C k }$ for some positive constant $C>0$. Also, we require that   $\max\{ \mathbb{E}(| Y_1^2-\sigma_l^2 |^{2+\delta  }),  \mathbb{E}(| Y_{t_0+1}^2- \sigma_r^2|^{2+\delta  }) \} \,\leq \,D_1$ for some positive constants $\delta,D_1$. In addition assume that: 
 	\begin{enumerate}[a.]
		\item There exists a constant $c>0$ such that $\min\{t_0,T-t_0\}>cT$.
		\item For some  fixed intervals $I_1 \subset (1,\infty)$  and $I_2 \subset (0,1)$   we have  that  $\tau^2 \in I_1\cup I_2$.  
		\item  The hyperparameters are $a_0>0$ and $\bfpi$ satisfies that  $\sum_t \pi_t=1$, and  $ c_{\min}/T \leq \pi_t \leq c_{\max}/T$ for all $t =1,\ldots,T$, and some positive constants $c_{\min}$ and $c_{\max}$. 
	\end{enumerate}
\end{assumption}

\begin{lemma}
		\label{lemma2}
	Suppose that  Assumption \ref{ass1_s} holds and  for $s< t_0$ let 
	
	\begin{equation}
		\label{eqn:error_terms_0}
		\begin{array}{lll}
			\tilde{\epsilon}_{t_0,s}    &:=&  \displaystyle \frac{1}{2\sigma_l^2}\sum_{j=s}^{t_0-1} Y_j^2   -    \frac{(t_0-s)}{2},\\
			\tilde{\epsilon}_{t_0}  &  :=&  \displaystyle   \frac{1}{2\sigma_l^2} \sum_{j=t_0}^{T} Y_j^2 -    \frac{1}{\tau^2} \frac{T-t_0+1}{2},\\
			\tilde{\epsilon}_s  & :=& \displaystyle  \frac{1}{2\sigma_l^2}\sum_{j=s}^{T} Y_j^2   - \frac{(t_0-s)}{2} -  \frac{1}{\tau^2} \frac{(T-t_0+1)}{2}.
		\end{array}
	\end{equation}
Then for any $a_T \in (0,1)$ for some constant $C>0$ we have that 
\begin{equation}
	\label{eqn:conclusion1}
	\mathbb{P}\left (   \, \vert \tilde{\epsilon}_{t_0,s}   \vert    \leq   \frac{ C \sqrt{ (t_0-s) }\log(t_0-s) }{a_T} ,\,\,\,\,\forall s<t_0 \right)  \,\geq \, 1- a_T^2,
\end{equation}

and 
\begin{equation}
	\label{eqn:conclusion2}
	\mathbb{P}\left (   \vert \tilde{\epsilon}_{t_0}   \vert    \leq   \frac{ C  \sqrt{ (T- t_0+1) }\log(T-t_0+1) }{c_0 \, a_T}  \right)  \,\geq \, 1- a_T^2,
\end{equation}
where $c_0$ is the smallest element of $I_2$. Furthermore, 
\begin{equation}
	\label{eqn:conclusion3}
	      \tilde{\epsilon}_s   =				\tilde{\epsilon}_{t_0,s}    + 		\tilde{\epsilon}_{t_0}.
\end{equation}
	
\end{lemma}

\begin{proof}
	First we prove (\ref{eqn:conclusion1}). Towards that end, notice that 
	\[
	   	\tilde{\epsilon}_{t_0,s}  \,=\,  \frac{1}{2}\left[     \sum_{j=s}^{t_0-1}   \left(     \frac{Y_j^2}{\sigma_l^2} -   \mathbb{E}\left(\frac{Y_j^2}{\sigma_l^2}\right)    \right)         \right].
	\]
	Thus $\epsilon_{t_0,s}$ is the sum of zero centered random variables.  Moreover,
	\[
	\sigma(\frac{Y^2_s}{\sigma_l^2}-1,s\le t)  \subset   \sigma(Y_s,s\le t) ,\,\,\,\,\,\,    \sigma(\frac{Y_s^2}{\sigma_l^2}-1,s\ge t+k) \,\subset \, \sigma(Y_s,s\ge t+k) 
	\]
	for all $t$ and $k$. Hence,
	\[
	\alpha_k( \{Y_t^2/\sigma_l^2-1 \}_{t\in \mathbb{Z}}  )     \,\leq\,  \alpha_k( \{Y_t \}_{t\in \mathbb{Z}}  )   \,\leq\, e^{-Ck}
	\]
	for all $k$.
	Therefore, an application of Lemma \ref{lemma1}  with $\nu =1$ leads to
	\[
	   	\mathbb{P}\left (  \, \vert \tilde{\epsilon}_{t_0s}   \vert    \leq   \frac{ \tilde{C} \sqrt{ (t_0-s) }\log(t_0-s) }{a_T} \,\,\,\,\forall s< t_0  \right)  \,\geq \, 1- a_T^2,
	\]
	where $a_T \in (0,1)$. This proves (\ref{eqn:conclusion1}).

	Next, we prove (\ref{eqn:conclusion2}). Notice that 
	\[
		\tilde{\epsilon}_{t_0}    \,=\, \frac{1}{2\tau^2} \left[ \sum_{j=t_0}^{T} \left( \frac{Y_j^2}{\sigma_r^2} -   \mathbb{E}\left( \frac{Y_j^2}{\sigma_r^2} \right)  \right)  \right].
	\]
	Hence, the same argument from above shows that
	\[
	\mathbb{P}\left(   \left \vert  \sum_{j=t_0}^{T} \left( \frac{Y_j^2}{\sigma_r^2} -   \mathbb{E}\left( \frac{Y_j^2}{\sigma_r^2} \right)  \right)  \right\vert  \leq  \frac{ \tilde{C} \sqrt{ (t_0-s) }\log(t_0-s) }{a_T} \right)\,\geq \, 1- a_T^2.
	\] 
	This in turn implies (\ref{eqn:conclusion2}).  
	
	Finally, (\ref{eqn:conclusion3})  follows by definition.
	 
\end{proof}

\begin{lemma}
	\label{lem4}
	Assume in addition to Assumption \ref{ass1_s} it holds that  $Y_{t} \iidsim N(0,\sigma_r^2)$ for $t \geq t_0$ and $Y_{t} \iidsim N(0,\sigma_l^2)$ for $t < t_0$. With the notation from Lemma \ref{lemma2} we have that 
		\[
	\mathbb{P}(\vert \tilde{\epsilon}_{t_0,s}  \vert  \geq    8\sqrt{(t_0-s)\log(t_0-s)} ) \,\leq\, \frac{2}{(t_0-s)^{8}}, \,\,\,\,\forall s<t_0,
	\]
	and 
	\[
		\mathbb{P}\left(   \vert  \tilde{\epsilon}_{t_0} \vert    \geq c_0^{-1} 8\sqrt{(T-t_0+1)\log(T-t_0+1)}\right)   \,\leq\, \frac{2}{(T-t_0+1)^8}.
	\]
\end{lemma}
\begin{proof}
	With the notation from (\ref{eqn:error_terms_0}), we have by the chi-squared concentration inequality, that 
	\begin{equation}
		\label{eqn:e5}
		\begin{array}{lll}
			\mathbb{P}(\vert \tilde{\epsilon}_{t_0,s}   \vert &\geq &   8\sqrt{(t_0-s)\log(t_0-s)} )  \,\leq \, 2\exp\left(  - (t_0-s)\left( 8\sqrt{\log(t_0-s)}/\sqrt{(t_0-s)} \right)^2 /8 \right) \\
			& =&\displaystyle  \frac{2}{(t_0-s)^{8}}.
		\end{array}
	\end{equation}
	Furthermore,   the chi-squared concentration inequality also implies that
	\begin{equation}%
		\label{eqn:e6}
	\begin{array}{lll}
		\mathbb{P}\left(   \vert  \tilde{\epsilon}_{t_0} \vert    \geq c_0^{-1} 8\sqrt{(T-t_0+1)\log(T-t_0+1)}\right)  & \leq &\mathbb{P}\left(   \vert  \tilde{\epsilon}_{t_0} \vert    \geq \tau^{-2}  8\sqrt{(T-t_0+1)\log(T-t_{0}+1)}\right)  \\
		& \leq & 2\exp\bigg(   -(T-t_0+1)\bigg(  8\sqrt{\log(T-t_0+1)} \\\ 
		& &/\sqrt{(T-t_0+1) }\bigg)^2/8  \bigg)\\
		& \leq&  \displaystyle \frac{2}{(T-t_0+1)^8}.
		\end{array}
	\end{equation}
The claim then follows.
\end{proof}

\begin{lemma}
	\label{lemma3}
	Suppose that  Assumption \ref{ass1_s} holds and  for $s> t_0$ let 
	
	\begin{equation}
		\label{eqn:error_terms_2}
		\begin{array}{lll}
			\tilde{\epsilon}_{t_0,s}    &:=&  \displaystyle \displaystyle \frac{1}{2\sigma_l^2}\sum_{j=t_0}^{s-1} Y_j^2   -    \frac{1}{\tau^2} \frac{(s-t_0)}{2},\\
			\tilde{\epsilon}_{s}  &  :=&  \displaystyle  \frac{1}{2\sigma_l^2}\sum_{j=s}^{T} Y_j^2   -  \frac{1}{\tau^2}  \frac{(T-s+1)}{2}, \\
			\tilde{\epsilon}_{t_0} & :=& \displaystyle  \frac{1}{2\sigma_l^2}\sum_{j=t_0}^{T} Y_j^2   -  \frac{1}{\tau^2} \frac{(T-t_0+1)}{2}.
		\end{array}
	\end{equation}

	Then for any $a_T \in (0,1)$ for some constant $C>0$ we have that 
	\begin{equation}
		\label{eqn:conclusion4}
		\mathbb{P}\left (   \vert \tilde{\epsilon}_{t_0,s}   \vert    \leq   \frac{ C \sqrt{ (s-t_0) }\log(s-t_0) }{ c_0 a_T},\,\,\,\,\,\forall s> t_0  \right)  \,\geq \, 1- a_T^2,
	\end{equation}
	
	and 
	\begin{equation}
		\label{eqn:conclusion5}
		\mathbb{P}\left (   \vert \tilde{\epsilon}_{s}   \vert    \leq   \frac{ C  \sqrt{ (T- s+1) }\log(T-s+1) }{c_0 \, a_T}  ,\,\,\,\,\,\,  \forall s>t_0 \right)  \,\geq \, 1- a_T^2,
	\end{equation}
	where $c_0$ is the smallest element of $I_2$. Furthermore, 
	\begin{equation}
		\label{eqn:conclusion6}
		\tilde{\epsilon}_{t_0}  =				\tilde{\epsilon}_{t_0,s}    + 		\tilde{\epsilon}_{s}.
	\end{equation}
	
\end{lemma}

\begin{proof}
The claim follows with a very similar argument to that in the proof of Lemma \ref{lemma2}.
\end{proof}

\begin{lemma}
	\label{lem5}
	Assume in addition to Assumption \ref{ass1_s} it holds that  $Y_{t} \iidsim N(0,\sigma_r^2)$ for $t \geq t_0$ and $Y_{t} \iidsim N(0,\sigma_l^2)$ for $t < t_0$. With the notation from Lemma \ref{lemma3} we have that 
	\[
	\mathbb{P}\left(   \vert  \tilde{\epsilon}_{s,t_0}\vert    \geq c_0^{-1} 8\sqrt{(s-t_0)\log(s-t_0)}\right) \,\leq\, \frac{2}{(s-t_0)^8},   \,\,\,
	\]
	and 
	\[
		\mathbb{P}\left(   \vert  \tilde{\epsilon}_{s}\vert    \geq c_0^{-1} 8\sqrt{(T-s+1)\log(T-s+1)}\right)\,\leq \, \frac{2}{(T-s+1)^8}
	\]
	for all $s> t_0$.
\end{lemma}

\begin{proof}
	Then, by the chi-squared concentration inequality
	
	\begin{equation}
		\label{eqn:e7}
		\begin{array}{lll}
			\mathbb{P}\left(   \vert  \tilde{\epsilon}_{s,t_0}\vert    \geq c_0^{-1} 8\sqrt{(s- t_0)\log(s-t_0)}\right)  & \leq &\mathbb{P}\left(   \vert  \tilde{\epsilon}_{s,t_0} \vert    \geq \tau^{-2}  8\sqrt{(s-t_0)\log(s-t_0)}\right)  \\
			& \leq & 2\exp\bigg(   -(s-t_0)\bigg(  8\sqrt{\log(s-t_0)} \\\ 
			& &/\sqrt{(s-t_0) }\bigg)^2/8  \bigg)\\
			& \leq&  \displaystyle \frac{2}{(s-t_0)^8}.
		\end{array}
	\end{equation}
	Similarly,

	\begin{equation}
		\label{eqn:e8}
		\begin{array}{lll}
			\mathbb{P}\left(   \vert  \tilde{\epsilon}_{s}\vert    \geq c_0^{-1} 8\sqrt{(T-s+1)\log(T-s+1)}\right)  & \leq &\mathbb{P}\left(   \vert  \tilde{\epsilon}_s \vert    \geq \tau^{-2}  8\sqrt{(T-s+1)\log(T-s+1)}\right)  \\
			& \leq & 2\exp\bigg(   -(T-s+1)\bigg(  8\sqrt{\log(T-s+1)} \\\ 
			& &/\sqrt{(T-s+1) }\bigg)^2/8  \bigg)\\
			& \leq&  \displaystyle \frac{2}{(T-s+1)^8}.
		\end{array}
	\end{equation}
	
\end{proof}

\section{Proof of Theorem 1}

We focus on the proof when the data are $\alpha$-mixing. This is based on Lemmas \ref{lemma2} and \ref{lemma3}. The case with Gaussian i.i.d data can be done with the same argument replacing Lemma \ref{lemma2} with Lemma \ref{lem4} and Lemma \ref{lemma3} with Lemma \ref{lem5}.

For simplicity of notation, we write $\sigma = \sigma_l$  and  $t = t_0$. Next notice that
\[
\begin{array}{lll}
	\mathbb{P}\left(   \bfY |    \gamma_t = 1, \bfpi, \sigma^2, a_0, b_0  \right) &=& \displaystyle   \int  	\mathbb{P}\left(   \bfY |    \gamma_t = 1, \bfpi, \sigma^2, a_0 ,b_0 , \tau\right) dP(\tau^2 |a_0 )\\
	& \propto& \displaystyle \int    \vert \Gamma \vert^{-1/2} \exp\left(   -\frac{1}{2}    \bfY^{\top } \Sigma^{-1} \bfY \right) \text{Gamma}(\tau^2 |a_0,b_0) d\tau^2\\
\end{array}
\]
where
\[
\Sigma \,=\, \left(  \begin{matrix}
	\sigma^2 I_{t-1} &0\\
	0&   \frac{\sigma^2}{\tau^2} I_{T-t+1}
\end{matrix}\right)
\]
with the notation $I_m$ indicating the $m\times m $ identity matrix. Notice that $\vert \Sigma\vert = (\sigma^2)^{t-1}  \left(  \sigma^2/\tau^2 \right)^{T-t+1}$, and hence
\[
\begin{array}{lll}
	\mathbb{P}\left(   \bfY |    \gamma_t = 1, \bfpi, \sigma^2, a_0 ,b_0 \right) &\propto& \displaystyle   \int  	\frac{ \tau^{\frac{T-t+1}{2}}   }{(\sigma^2)^{T/2}}\exp\left(   -\frac{1}{2\sigma^2} \sum_{j=1}^{t-1} Y_j^2   \right)\cdot\exp\bigg(   - \frac{\tau^2}{2\sigma^2}\sum_{j=t}^{T}Y_j^2  \bigg)\cdot  \\
	& &\displaystyle \frac{b_0^{a_0}   (\tau^2)^{a_0-1}  \exp(-b_0\tau^2) }{\Gamma(a_0)}d\tau^2\\
	&  = &\displaystyle  \frac{1}{\sigma^T}\exp\bigg(   -\frac{1}{2\sigma^2} \sum_{j=t}^{t-1} Y_j^2   \bigg) \frac{b_0^{ a_0 }}{\Gamma(a_0)} \int (\tau^2)^{ \frac{T-t+1}{2}   +a_0 -1 }\cdot\\
	& & \displaystyle \exp\bigg(-\tau^2 \bigg(b_0 + \frac{1}{2\sigma^2}\sum_{j=t}^T Y_j^2\bigg) \bigg) d\tau^2\\
	&  = &\displaystyle  \frac{1}{\sigma^T}\exp\bigg(   -\frac{1}{2\sigma^2} \sum_{j=1}^{t-1} Y_j^2   \bigg) \frac{b_0^{ a_0 }}{\Gamma(a_0)} 
	\frac{\Gamma\big(  \frac{T-t+1}{2}+a_0 \big)}{\big(  b_0  +   \frac{1}{2\sigma^2} \sum_{j=t}^{T} Y_j^2\big)^{ \frac{T-t+1}{2}+a_0  }}\\ 
	&\propto & \displaystyle \exp\bigg(   -\frac{1}{2\sigma^2} \sum_{j=1}^{t-1} Y_j^2   \bigg)
	\frac{\Gamma\big(  \frac{T-t+1}{2}+a_0 \big)}{\big(  b_0  +   \frac{1}{2\sigma^2} \sum_{j=t}^{T} Y_j^2\big)^{ \frac{T-t+1}{2}+a_0  }}\\ 
	& =: & \Delta_t.
\end{array}
\]


\subsection{Case $s < t$.}

Suppose that $t$ is the true change point and  $s<t$. Then
\[
\begin{array}{lll}
	\displaystyle    	 \frac{\Delta_t}{\Delta_s}  &  = &\displaystyle \exp\bigg(   -\frac{1}{2\sigma^2} \sum_{j=1}^{t-1} Y_j^2   \bigg)
	\frac{\Gamma\big(  \frac{T-t+1}{2}+a_0 \big)}{\big(  b_0  +   \frac{1}{2\sigma^2} \sum_{j=t}^{T} Y_j^2\big)^{ \frac{T-t+1}{2}+a_0  }}\\ 
	& &   	\displaystyle \cdot  \left[\exp\bigg(   -\frac{1}{2\sigma^2} \sum_{j=1}^{s-1} Y_j^2   \bigg)
	\frac{\Gamma\big(  \frac{T-s+1}{2}+a_0 \big)}{\big(  b_0  +   \frac{1}{2\sigma^2} \sum_{j=s}^{T} Y_j^2\big)^{ \frac{T-s+1}{2}+a_0  }}\right]^{-1}\\
	&\approx& \displaystyle   \frac{  \Gamma\big(  \frac{T-t+1}{2}\big)    }{  \Gamma\big(  \frac{T-t+1}{2} +  \frac{t-s}{2} \big) }\exp\big(    -\frac{1}{2\sigma^2} \sum_{j=s}^{t-1} Y_j^2  \big)\frac{  \big(  \frac{1}{2\sigma^2} \sum_{j=s}^{T} Y_j^2  \big)^{   \frac{T-s+1}{2}+a_0 }   }{ \big(  \frac{1}{2\sigma^2} \sum_{j=t}^{T} Y_j^2  \big)^{   \frac{T-t+1}{2}+a_0 }   }\\
	& \approx&  \displaystyle \frac{1}{\big(   \frac{T-t+1}{2} \big)^{  \frac{t-s}{2}  } }\exp\big(    -\frac{1}{2\sigma^2} \sum_{j=s}^{t-1} Y_j^2  \big)\frac{  \big(  \frac{1}{2\sigma^2} \sum_{j=s}^{T} Y_j^2  \big)^{   \frac{T-s+1}{2}+a_0 }   }{ \big(  \frac{1}{2\sigma^2} \sum_{j=t}^{T} Y_j^2  \big)^{   \frac{T-t+1}{2}+a_0 }   }\\
\end{array}
\]
where the third equality follows by the properties of the gamma function.  Next, let
\begin{equation}
	\label{eqn:error_terms}
	\begin{array}{lll}
		\tilde{\epsilon}_{t,s}    &:=&  \displaystyle \frac{1}{2\sigma^2}\sum_{j=s}^{t-1} Y_j^2   -    \frac{(t-s)}{2},\\
		\tilde{\epsilon}_t   &  :=&  \displaystyle   \frac{1}{2\sigma^2} \sum_{j=t}^{T} Y_j^2 -    \frac{1}{\tau^2} \frac{T-t+1}{2},\\
		\tilde{\epsilon}_s  & :=& \displaystyle  \frac{1}{2\sigma^2}\sum_{j=s}^{T} Y_j^2   - \frac{(t-s)}{2} -  \frac{1}{\tau^2} \frac{(T-t+1)}{2},
	\end{array}
\end{equation}

\[
\begin{array}{lll}
	\displaystyle 	\log\left(  \frac{\Delta_t}{\Delta_s} \right) &\approx&\displaystyle  -\frac{(t-s)}{2}\log\left(  \frac{T-t+1}{2} \right)- \frac{(t-s)}{2}
	-\tilde{\epsilon}_{t,s} +    \\
	& &  \displaystyle \left( \frac{T-s+1}{2}+a_0\right)\log\left(   \frac{(t-s)}{2} + \frac{1}{\tau^2}\frac{(T-t+1)}{2}  + \tilde{\epsilon}_s\right)\\
	& \displaystyle &-\left(\frac{T-t+1}{2}+a_0\right)\log\left( \frac{1}{\tau^2}\frac{(T-t+1)}{2}+ \tilde{\epsilon}_t\right).\\
\end{array}
\]
Next we set $v =1/\tau^2$, and write

\[
\begin{array}{lll}
	G(v )&\,:=\,& \displaystyle -\frac{(t-s)}{2}\log\left(  \frac{T-t+1}{2} \right)- \frac{(t-s)}{2}
	-\tilde{\epsilon}_{t,s} +    \\
	&&	\displaystyle \left( \frac{T-s+1}{2}+a_0\right)\log\left(   \frac{(t-s)}{2} + v\frac{(T-t+1)}{2}  + \tilde{\epsilon}_s\right)\\
	& &\displaystyle -\left(\frac{T-t+1}{2}+a_0\right)\log\left(v\frac{(T-t+1)}{2}+ \tilde{\epsilon}_t\right).\\
\end{array}
\]
%

For the constant $C>0$ in Lemma \ref{lemma2}, consider the event $\Omega$ given as
\[
\begin{array}{lll}
	\Omega &\,:=\,& \left\{    \vert \tilde{\epsilon}_{t,s}   \vert    \,\leq \,  \frac{ C \sqrt{ (t-s) }\log(t-s) }{a_T} ,\,\,\,\forall   \,\,   s< t,  \,  s \geq cT,   \,\,  t-s \geq  \sqrt{T }\log^{1+\epsilon} T \right\} \cap\\
	& & \left\{    \vert \tilde{\epsilon}_{t}   \vert    \,\leq \,\frac{ C  \sqrt{ (T- t+1) }\log(T-t+1) }{c_0 \, a_T}    \,\,\right\}. 
\end{array}
\]
Then Lemma  \ref{lemma2} implies that 
\[
\mathbb{P}\left(   \Omega  \right)\,\geq \,   1  - 2 a_T^2.
\]
Suppose now that $\Omega$ holds. Then   notice that $G(v) \geq G_l(v)$ where
\[
\begin{array}{lll}
	G_l(v )&\,:=\,& \displaystyle -\frac{(t-s)}{2}\log\left(  \frac{T-t+1}{2} \right)- \frac{(t-s)}{2}
	-\tilde{\epsilon}_{t,s} +    \\
	&&	\displaystyle \left( \frac{T-s+1}{2}+a_0\right)\log\left(   \frac{(t-s)}{2} + v\frac{(T-t+1)}{2}  + \tilde{\epsilon}_{t,s} -    C (c_0 a_T)^{-1}\sqrt{T }\log T\right)\\
	& &\displaystyle -\left(\frac{T-t+1}{2}+a_0\right)\log\left(v\frac{(T-t+1)}{2}+    C (c_0 a_T)^{-1}\sqrt{T }\log T\right).\\
\end{array}
\]
We proceed to  show that $G_l(v)>0$ when $v$ is sufficiently far away from $1$. The first derivative of $G_l$ evaluated at an arbitrary point $v$ is  
\begin{equation}
	\label{eqn:lower}
	\begin{array}{lll}
		G_l^{\prime}(v)  & = &  \displaystyle \left(   \frac{T-s+1}{2}+a_0 \right)\cdot \frac{  \frac{T-t+1}{2}   }{  \frac{(t-s)}{2}  +   \frac{v(T-t+1)}{2} +  \tilde{\epsilon}_{t,s}-  C (c_0 a_T)^{-1}\sqrt{T }\log T  }-\\
		& & \displaystyle	\left(   \frac{T-t+1}{2}+a_0 \right)\cdot \frac{  \frac{T-t+1}{2}   }{     \frac{v(T-t+1)}{2} +   C (c_0 a_T)^{-1}\sqrt{T }\log T },\\
		&=& \displaystyle	\frac{  \frac{T-t+1}{2}   }{D_l(v)}\Big[ (v-1)\frac{T-t+1}{2}\, \frac{t-s}{2} +  C (c_0 a_T)^{-1}\sqrt{T }\log T \frac{(2T -t-s+2)}{2} - \tilde{\epsilon}_{t,s} \left(\frac{T-t+1}{2}+a_0 \right)\Big],  \\
	\end{array}
\end{equation}
where $D_l(v)$ is the common denominator. It is easy to see that $D_l(v)>0$ for all $v  =  1/\tau^2 \in    I_1 \cup I_2$. The first two terms  in the square bracket dominate  the other because  $|\tilde{\epsilon}_{t,s}| \leq C a_T^{-1} \sqrt{(t-s)}\log(t-s)$. We get that $G'_l(v)>0$ for $\tau^2\in I_2$ and  $G'_l(v)<0$ for $\tau^2 \in I_1$. The derivative behaves like $(t-s)$ because $D_l(v)$ is  of order $T^2$.  Hwere, we have used the fact that we can take $a_T \rightarrow 0$ such that $\log^{\xi} T a_T \,\rightarrow \,\infty$.

Now we lower bound $G_l(v)$, 
\[
\begin{array}{lll}
	G_l(v)
	&=& \displaystyle  -\frac{(t-s)}{2}
	-\epsilon_{t,s} + \frac{(t-s)}{2} \log \left(\frac{ \frac{(t-s)}{2} - (1-v) \frac{(T-t+1)}{2}+  \tilde{\epsilon}_{t,s}-  C (a_Tc_0)^{-1}\sqrt{T} \log T }{\frac{(T-t+1)}{2}}+1 \right)+\\
	&& \displaystyle\left(\frac{T-t+1}{2}+a_0\right) \Bigg[\log \left( \frac{(t-s)}{2} + v\frac{(T-t+1)}{2}+  \tilde{\epsilon}_{t,s}-  C (a_Tc_0)^{-1}\sqrt{T} \log T\right) - \\
	&& \displaystyle  \log \left(v\frac{(T-t+1)}{2}+  C (a_Tc_0)^{-1}\sqrt{T }\log T \right) \Bigg]\\
	& \gtrsim& \displaystyle \Bigg[\left(-\frac{(t-s)}{2}
	-\tilde{\epsilon}_{t,s} \right) \left(\frac{(t-s)}{2} + v\frac{(T-t+1)}{2}  + \tilde{\epsilon}_{t,s}-  C (a_Tc_0)^{-1}\sqrt{T }\log T\right ) +\\
	&& \displaystyle \frac{(t-s)}{2} \left( \frac{(t-s)}{2} - (1-v) \frac{(T-t+1)}{2}+  \tilde{\epsilon}_{t,s}-  C (a_Tc_0)^{-1}\sqrt{T }\log T\right)+\\
	&& \displaystyle
	\left(\frac{T-t+1}{2}+a_0\right)
	\left(\frac{(t-s)}{2} +  \tilde{\epsilon}_{t,s} -    2 C (a_Tc_0)^{-1}\sqrt{T }\log T \right)\Bigg]\frac{1}{T}\\	
\end{array}
\]
\[
\begin{array}{lll}
	& \geq & \displaystyle  \Bigg[-\left(\frac{t-s}{2}+\tilde{\epsilon}_{t,s}+(v-1)\frac{T-t+1}{2}-  C (a_Tc_0)^{-1}\sqrt{T }\log T\right) \tilde{\epsilon}_{t,s}
	\\	
	&&\displaystyle +\left(\frac{t-s}{2}+\tilde{\epsilon}_{t,s}- 2C (a_Tc_0)^{-1}\sqrt{T }\log T\right)a_0-
	2C (a_Tc_0)^{-1}\sqrt{T} \log T \frac{T-t+1}{2} \Bigg]\frac{1}{T},
\end{array}
\]

where in first inequality we lower bound the middle term through the inequality $\log(1+x)\geq x/(x+1)$ for $x> -1$, the last term using the mean value theorem, and the common denominator by $T$.  We consider two Taylor's  expansions. In the first one, we apply Taylor's theorem to write $G_l(v)$ doing the expansion at a point $v_l$ such that $v_l=1-\delta_1$ and $\delta_1$ is positive and small and $v <v_l<1$.  Thus, we write
$$G_l(v)=G_l(v_l)+G'_l(v')(v-v_l),$$
with $v < v' < v_l. $  Then from (\ref{eqn:lower}),  we have that $G'_l(v')(v-v_l)$ is positive and behaves like $(t-s)$ for all $v$ not too close to $v_l$. Thus,   $G'_l(v')(v-v_l)  \asymp (t-s)$ for all $0 < v < v_l -\delta_2$  for some $\delta_2>0$. Since $G_l(v_l)\geq - \sqrt{T }\log T$,  the condition $(t-s) \geq \sqrt{T}\log^{1+\epsilon}T$  implies that $G_l(v)>0$ for all $0 < v < v_l -\delta_2$,   and we have the desired rate of convergence.

The argument on the opposite side ($v>1$) is symmetrical. We can do the Taylor expasion at a point $v_u> 1+\delta_1$ for an appropriate $\delta_1>0$, and have $G_l(v)>0$ for all $v>v_u+\delta_2$ for some $\delta_2 >0$. 

To sum up,  there exists $\delta>0 $, such  that $G(v)>0$ for all $0<v<1-2\delta$ and $v>1+2\delta$.

\subsection{Case $s>t$}

We now proceed to show that if $t$ is the location of the true change point and  $\mid t-s\mid   \gtrsim \sqrt{T }\log^{1+\epsilon} T$ then  $\Delta_t  > \Delta_s$. Towards that end,  as before
\[
\begin{array}{lll}
	\displaystyle    \frac{\Delta_t}{\Delta_s}  & = & \displaystyle   \exp\left(     \frac{1}{2\sigma^2}   \sum_{j=t}^{s-1} y_j^2   \right)\frac{\Gamma \left( \frac{T-t+1}{2} +a_0   \right) }{\Gamma \left( \frac{T-s+1}{2} +a_0   \right)} \frac{\left(        b_0 +     \frac{1}{2\sigma^2}    \sum_{j=s}^{T}    Y_j^2  \right)^{  \frac{T-s+1}{2}  +a_0  }  }{\left(        b_0 +     \frac{1}{2\sigma^2}    \sum_{j=t}^{T}    Y_j^2  \right)^{  \frac{T-t+1}{2}  +a_0  }  }
\end{array}
\]
and so
\[
\begin{array}{lll}
	\displaystyle  \log\left(  \frac{\Delta_t}{\Delta_s}\right)  & = & \displaystyle    \frac{1}{2\sigma^2}   \sum_{j=t}^{s-1} y_j^2   +  \log \left( \frac{\Gamma \left( \frac{T-t+1}{2} +a_0   \right) }{\Gamma \left( \frac{T-s+1}{2} +a_0   \right)}\right)\\
	& &\displaystyle 	\left(\frac{T-s+1}{2}  +a_0\right) \log \left( b_0 +     \frac{1}{2\sigma^2}    \sum_{j=s}^{T}    Y_j^2    \right)   \\
	& &\displaystyle \,-\,  \left(\frac{T-t+1}{2}  +a_0\right)  \log \left( b_0 +     \frac{1}{2\sigma^2}    \sum_{j=t}^{T}    Y_j^2  \right). \\
\end{array}
\]
Next for $v = 1/\tau^2$ define 
$$	\tilde{\epsilon}_{s,t}    :=  \displaystyle \frac{1}{2\sigma^2}\sum_{j=t}^{s-1} Y_j^2   -  v \frac{(s-t)}{2},$$
\[
\tilde{\epsilon}_s  \,:=\,      \displaystyle \frac{1}{2\sigma^2}\sum_{j=s}^{T} Y_j^2   -  v \frac{(T-s+1)}{2},
\]
and
\[
\tilde{\epsilon}_t \,:=\,   \displaystyle \frac{1}{2\sigma^2}\sum_{j=t}^{T} Y_j^2   -  v \frac{(T-t+1)}{2}.
\]



Then we define the event $\mathcal{E}$ as 
\[
\begin{array}{lll}
	\mathcal{E}&\,:=\,& \left\{    \vert \tilde{\epsilon}_{s,t}   \vert    \,\leq \,  C(c_0 a_T)^{-1}\sqrt{(s-t)}\log(s-t) ,\,\,\,\forall   \,\,   s>t,  \,  s \leq cT,   \,\,  t-s \geq  \sqrt{T }\log^{1+\epsilon} T\right\} \cap\\
	& & \left\{    \vert \tilde{\epsilon}_{s}   \vert    \,\leq \,  C(c_0 a_T)^{-1}\sqrt{(T-t+1)}\log(T-t+1), \,\,\forall   \,\,   s>t,  \,  s \leq cT\right\} 
\end{array}
\]
for some $c \in (0,1)$, and by Lemma \ref{lemma3} we have that 
\[
\mathbb{P}\left(   \mathcal{E}  \right)\,\geq \,   1 -  2a_T.
\]
Next, let us assume that $\mathcal{E}$ holds. Then

\[
\begin{array}{lll}
	\displaystyle  \log\left(  \frac{\Delta_t}{\Delta_s}\right)  & \approx &  \displaystyle    \frac{1}{2\sigma^2}   \sum_{j=t}^{s-1} y_j^2   +  \log \left(\left( \frac{T-s+1}{2} \right)^{(s-t)/2} \right)\\
	& &\displaystyle 	\left(\frac{T-s+1}{2} \right) \log \left(    \frac{1}{2\sigma^2}    \sum_{j=s}^{T}    Y_j^2    \right)   \\
	& &\displaystyle \,-\,  \left(\frac{T-t+1}{2}\right)  \log \left(      \frac{1}{2\sigma^2}    \sum_{j=t}^{T}    Y_j^2  \right)\\
\end{array}
\]
\[
\begin{array}{lll}
	& \geq& \displaystyle 	 \tilde{\epsilon}_{s,t}     + v \frac{(s-t)}{2}\,+\, \left(  \frac{s-t}{2}\right)\log \left(   \frac{T-s+1}{2} \right) \\
	& &\displaystyle  -  \Big(\frac{T-t+1}{2}+a_0 \Big)\log  \Big(   \tilde{\epsilon}_{t} + v\frac{T-t+1}{2}\Big) \\
	&&\displaystyle    	  +\Big(\frac{T-s+1}{2} +a_0 \Big) \log \Bigg(    \tilde{\epsilon}_{s} + v\frac{T-s+1}{2} \Bigg)\\
	& =:& \displaystyle   H(v).
\end{array}
\]
We follow the same step as before. We define a lower bound 
\begin{align} \label{eq:target}
	H_l(v)
	:=&  -  C (c_0 a_T)^{-1}\sqrt{(s-t)} \log (s-t)  + v\frac{(s-t)}{2} + \Big(\frac{s-t}{2}\Big) \log 
	\Big(  \frac{T-s+1}{2}\Big) \\
	&-  \Big(\frac{T-t+1}{2} \Big)\log  \Big(   - C (c_0 a_T)^{-1}\sqrt{T} \log T + v\frac{T-t+1}{2}\Big) +\\
	& \displaystyle
	\Big(\frac{T-s+1}{2} \Big) \log \Bigg( - C (c_0 a_T)^{-1}\sqrt{T}\log T + v\frac{T-s+1}{2} \Bigg),
	\nonumber
\end{align}
and notice that

\[
\begin{array}{lll}
	H_l(v)  & =  & \displaystyle -C (c_0 a_T)^{-1}\sqrt{(s-t)} \log (s-t)  + v\frac{(s-t)}{2} + \\
	& &\displaystyle \Big(\frac{T-s+1}{2}\Big) \log \left(\frac{v\frac{T-s+1}{2} -C (c_0 a_T)^{-1}\sqrt{T} \log T }{v\frac{T-t+1}{2} -C (c_0 a_T)^{-1}\sqrt{T }\log T }\right) +
	\\
	&   & \displaystyle	\Big(\frac{s-t}{2}\Big) \log \left(\frac{\frac{T-s+1}{2}  }{v\frac{T-t+1}{2} -C (c_0 a_T)^{-1}\sqrt{T }\log T   }\right) \\
	& =   & \displaystyle  - C (c_0 a_T)^{-1}\sqrt{(s-t) }\log (s-t)  + v\frac{(s-t)}{2} +   \\
	&&\displaystyle \Big(\frac{T-s+1}{2}\Big) \log\left( \frac{ v \frac{(t-s)}{2}   }{v\frac{T-t+1}{2} -C (c_0 a_T)^{-1}\sqrt{T }\log T     }   + 1\right) + \\
	&   & \displaystyle	\Big(\frac{s-t}{2}\Big) \log \left(\frac{\frac{T-s+1}{2} }{v\frac{T-t+1}{2} -C (c_0 a_T)^{-1}\sqrt{T }\log T}\right) \\
	& \geq& \displaystyle  - C (c_0 a_T)^{-1}\sqrt{(s-t) }\log (s-t)  + v\frac{(s-t)}{2} +   \\
	& &\displaystyle \Big(\frac{T-s+1}{2}\Big) \frac{ \frac{(t-s)}{2}   }{v\frac{T-s+1}{2} -C (c_0 a_T)^{-1}\sqrt{T} \log T}   +\\
	&& \displaystyle  \left(\frac{s-t}{2}\right)\left(  \frac{   \frac{T-s+1}{2}  }{\frac{T-s+1}{2}    +   v\frac{T-t+1}{2} -C (c_0 a_T)^{-1}\sqrt{T }\log T} \right)\\
	& \geq&\displaystyle  \frac{1}{T^2}\bigg\{  \bigg[ - C (c_0 a_T)^{-1}\sqrt{(s-t) }\log (s-t) + v\frac{(s-t)}{2}   \bigg] \cdot\\\ 
	& &  \,\,\,\,\,\,\,\,\,\,\,\,\bigg[ \frac{T-s+1}{2}    +   v\frac{T-t+1}{2} -C (c_0 a_T)^{-1}\sqrt{T }\log T\bigg]\cdot\\ 
	& &     \,\,\,\,\,\,\,\,\,\,\,\,\bigg[ v\frac{T-s+1}{2} -C (c_0 a_T)^{-1}\sqrt{T }\log T\bigg]\,+\, \bigg[\frac{T-s+1}{2}\bigg] \cdot \bigg[\frac{t-s}{2}\bigg]\cdot\\
	 & &  \,\,\,\,\,\,\,\,\,\,\,\,\bigg[ \frac{T-s+1}{2}    +   v\frac{T-t+1}{2} -C (c_0 a_T)^{-1}\sqrt{T }\log T\bigg]\\
	& &  \,\,\,\,\,\,\,\,\,\,\,\,    +    \bigg[  \frac{s-t}{2}\bigg]\cdot \bigg[\frac{T-s+1}{2} \bigg]  \cdot\bigg[v\frac{T-s+1}{2} -C (c_0 a_T)^{-1}\sqrt{T }\log T\bigg]  \bigg\}\\ 
	&\gtrsim & \displaystyle - \sqrt{T }\log T.
\end{array}
\]
Now, we look at the first derivative of  $H_l(v)$. We notice that 
\[
\begin{array}{lll}
	H_l^{\prime}(v) &  =&\displaystyle  \frac{(s-t)}{2} +  \Big(\frac{T-s+1}{2} \Big)\left(\frac{\frac{T-s+1}{2}}{ v\frac{T-s+1}{2}  - C (c_0 a_T)^{-1}\sqrt{T }\log T} \right) - \\
	&  & \displaystyle \Big(\frac{T-t+1}{2}  \Big)\left(\frac{\frac{T-t+1}{2}}{ v\frac{T-t+1}{2} -C (c_0 a_T)^{-1}\sqrt{T} \log T} \right)\\
	& =&\displaystyle  \frac{1}{D(v)}\bigg[  \frac{(s-t)}{2}\left( v\frac{T-s+1}{2} -C (c_0 a_T)^{-1}\sqrt{T }\log T\right) \left( v\frac{T-t+1}{2} -C (c_0 a_T)^{-1}\sqrt{T} \log T\right)+\\   
	& &\displaystyle \,\,\,\,\,\,\,\,\,\,\,\,\,\,\,   \left( \frac{T-s+1}{2}\right)^2\left[ v\frac{T-t+1}{2} -C (c_0 a_T)^{-1}\sqrt{T }\log T\right]\,+\,\\
	& &\displaystyle  \,\,\,\,\,\,\,\,\,\,\,\,\,\,\,   -\left( \frac{T-t+1}{2}\right)^2\left[ v\frac{T-s+1}{2} -C (c_0 a_T)^{-1}\sqrt{T }\log T\right] \bigg]\\
	& = & \displaystyle \frac{1}{D (v)} \Bigg[ v^2 \frac{(s-t)}{2}\frac{(T-s+1)}{2}\frac{(T-t+1)}{2}- \Big(\frac{T-t+1}{2} \Big)^2  v\frac{(T-s+1)}{2} +\\
	& & \displaystyle \,\,\,\,\,\,\,\,\,\,\,\,\,\,\, \Big(\frac{T-s+1}{2} \Big)^2 v\frac{(T-t+1)}{2} +\tilde{\xi}_{s,t}	\Bigg]\\
	& = & \displaystyle \frac{1}{D (v)} \Bigg[ v(v-1)\frac{(s-t)}{2} \frac{(T-s+1)}{2}\frac{(T-t+1)}{2} +\tilde{\xi}_{s,t}	 \Bigg]\\
\end{array}
\]
with $\tilde{\xi}_{s,t} = O(  T^2(a_T)^{-1}\sqrt{T}\log T)$, and $D(v)  \asymp T^2$. Hence, $H^{\prime}_l(v ) > 0$ if $v>1$ and $H^{\prime}_l(v ) <0$ if $v<1$. The proof concludes as in the previous case. 

\bigskip 

\textbf{Gaussian case.} In the case in which each observation $y_t$ follows a normal distribution, we follow the same proof but replacing Lemma  \ref{lemma2} with Lemma \ref{lem4} and Lemma \ref{lemma3} with Lemma \ref{lem5}.

\subsection{Proof of Corollary 1}

 Corollary 1 follows directly from the proof of Theorem 1, replacing Lemma \ref{lemma2} and Lemma \ref{lemma3} with  Lemma \ref{lemma6}  and  Lemma \ref{lemma7}, respectively. The latter two lemmas are given below.

\begin{lemma}
	\label{lemma6}
	For $s< t_0$ let 
	
	\begin{equation}
		\label{eqn:error_terms_02}
		\begin{array}{lll}
			\tilde{\epsilon}_{t_0,s}    &:=&  \displaystyle \frac{1}{2\sigma_l^2}\sum_{j=s}^{t_0-1} \hat{Y}_j^2   -    \frac{(t_0-s)}{2},\\
			\tilde{\epsilon}_{t_0}  &  :=&  \displaystyle   \frac{1}{2\sigma_l^2} \sum_{j=t_0}^{T} \hat{Y}_j^2 -    \frac{1}{\tau^2} \frac{T-t_0+1}{2},\\
			\tilde{\epsilon}_s  & :=& \displaystyle  \frac{1}{2\sigma_l^2}\sum_{j=s}^{T} \hat{Y}_j^2   - \frac{(t_0-s)}{2} -  \frac{1}{\tau^2} \frac{(T-t_0+1)}{2}.
		\end{array}
	\end{equation}
	Then for any $a_T \rightarrow 0$ for some constant $C>0$ we have that 
	\begin{equation}
		\label{eqn:conclusion12}
		\mathbb{P}\left (   \, \vert \tilde{\epsilon}_{t_0,s}   \vert    \leq   \frac{ C \sqrt{ T}\log T }{a_T} ,\,\,\,\,\forall s<t_0 \right)  \,\rightarrow \, 1,
	\end{equation}
	
	and 
	\begin{equation}
		\label{eqn:conclusion22}
		\mathbb{P}\left (   \vert \tilde{\epsilon}_{t_0}   \vert    \leq   \frac{ C  \sqrt{ (T- t_0+1) }\log(T-t_0+1) }{c_0 \, a_T}  \right)  \,\rightarrow \, 1,
	\end{equation}
	where $c_0$ is the smallest element of $I_2$. Furthermore, 
	\begin{equation}
		\label{eqn:conclusion32}
		\tilde{\epsilon}_s   =				\tilde{\epsilon}_{t_0,s}    + 		\tilde{\epsilon}_{t_0}.
	\end{equation}
	
\end{lemma}

\begin{proof}
	Notice that 
	\[
	\begin{array}{lll}
		\displaystyle \frac{1}{2\sigma_l^2}\sum_{j=s}^{t_0-1} \hat{Y}_j^2   -    \frac{(t_0-s)}{2}    & = & \displaystyle \frac{1}{2\sigma_l^2}\sum_{j=s}^{t_0-1} ( Y_j + \mu_j   - \hat{\mu}_j )^2   -    \frac{(t_0-s)}{2}\\
		 & = &\displaystyle \frac{1}{2\sigma_l^2}\sum_{j=s}^{t_0-1}  Y_j^2   -    \frac{(t_0-s)}{2}    \,+\,   \frac{1}{2\sigma_l^2}\sum_{j=s}^{t_0}  (\mu_j   - \hat{\mu}_j)^2    \,+\,   \frac{1}{\sigma_l^2}\sum_{j=s}^{t_0} Y_j  (\mu_j   - \hat{\mu}_j)\\
		  & = : & \displaystyle \frac{1}{2\sigma_l^2}\sum_{j=s}^{t_0-1}  Y_j^2   -    \frac{(t_0-s)}{2}   \, +\, \text{Err}_{s,t_0}\,\\
	\end{array}
	\]
	where 
	\[
    \begin{array}{lll}
    		\vert \text{Err}_{s,t_0}\vert    & \leq &  \displaystyle    \frac{1}{2\sigma_l^2}\sum_{j=s}^{t_0}  (\mu_j   - \hat{\mu}_j)^2     \,+\,    \frac{1}{\sigma_l^2}\left\vert    \sum_{j=s}^{t_0} Y_j  (\mu_j   - \hat{\mu}_j)\right\vert \\ 
    		  & \leq& \displaystyle  \frac{1}{2\sigma_l^2}\sum_{j=1}^{T}  (\mu_j   - \hat{\mu}_j)^2     \,+\,    \frac{1}{\sigma_l^2}\left\vert    \sum_{j=s}^{t_0} Y_j  (\mu_j   - \hat{\mu}_j)\right\vert  
    \end{array}
	\]
	Then, by Lemma \ref{lemma1} and our assumption on $\{\hat{\mu}_t\}$ we obtain that 
	\[
	 \underset{s<t_0}{\max}\,	\vert \text{Err}_{s,t_0}\vert    \,\leq\, c \frac{\sqrt{T} \log T}{a_T}
	\]
	for some positive constant $c>0$ with high probability.  The claim in (\ref{eqn:conclusion12}) then follows directly from Lemma \ref{lemma2}.
	
	To verify (\ref{eqn:conclusion22}), we observe that
	\[
	   \begin{array}{lll}
	   \displaystyle   \frac{1}{2\sigma_l^2} \sum_{j=t_0}^{T} \hat{Y}_j^2 -    \frac{1}{\tau^2} \frac{T-t_0+1}{2}  &  = &	   \displaystyle   \frac{1}{2\sigma_l^2} \sum_{j=t_0}^{T} (Y_j +  \mu_j - \hat{\mu}_j )^2 -    \frac{1}{\tau^2} \frac{T-t_0+1}{2} \\
	    & = &\displaystyle   \frac{1}{2\sigma_l^2} \sum_{j=t_0}^{T} Y_j^2 -    \frac{1}{\tau^2} \frac{T-t_0+1}{2}   \,+\, \frac{1}{2\sigma_l^2} \sum_{j=t_0}^{T} (\mu_j - \hat{\mu}_j )^2  \,+\, \frac{1}{2\sigma_l^2} \sum_{j=t_0}^{T} Y_j(\mu_j - \hat{\mu}_j )\\
	     & =:&\displaystyle   \frac{1}{2\sigma_l^2} \sum_{j=t_0}^{T} Y_j^2 -    \frac{1}{\tau^2} \frac{T-t_0+1}{2}   \,+\, \text{Err}_{t_0}\\
	   \end{array}
	\]
	with 
	\[
	 \begin{array}{lll}
	 	\vert   \text{Err}_{t_0}\vert   & \leq& \displaystyle    \frac{1}{2\sigma_l^2}\sum_{j=t_0}^{T}  (\mu_j   - \hat{\mu}_j)^2     \,+\,    \frac{1}{\sigma_l^2}\left\vert    \sum_{j=t_0}^{T} Y_j  (\mu_j   - \hat{\mu}_j)\right\vert \\ 
	 	& \leq& \displaystyle  \frac{1}{2\sigma_l^2}\sum_{j=1}^{T}  (\mu_j   - \hat{\mu}_j)^2     \,+\,       \frac{1}{\sigma_l^2} \left\vert    \sum_{j=t_0}^{T} Y_j  (\mu_j   - \hat{\mu}_j)\right\vert . 
	 \end{array}
	\]
	Hence, by  Lemma \ref{lemma1} and our assumption on $\{\hat{\mu}_t\}$ we obtain that 
	\[
		\vert \text{Err}_{t_0}\vert    \,\leq\, \frac{c \sqrt{T} \log T}{a_T}
	\]
	for some positive constant $c>0$ with high probability.  The claim follows. 
\end{proof}

\begin{lemma}
	\label{lemma7}
	Suppose that  Assumption \ref{ass1_s} holds and  for $s> t_0$ let 
	
	\begin{equation}
		\label{eqn:error_terms_22}
		\begin{array}{lll}
			\tilde{\epsilon}_{t_0,s}    &:=&  \displaystyle \displaystyle \frac{1}{2\sigma_l^2}\sum_{j=t_0}^{s-1} \hat{Y}_j^2   -    \frac{1}{\tau^2} \frac{(s-t_0)}{2},\\
			\tilde{\epsilon}_{s}  &  :=&  \displaystyle  \frac{1}{2\sigma_l^2}\sum_{j=s}^{T} \hat{Y}_j^2   -  \frac{1}{\tau^2}  \frac{(T-s+1)}{2}, \\
			\tilde{\epsilon}_{t_0} & :=& \displaystyle  \frac{1}{2\sigma_l^2}\sum_{j=t_0}^{T} \hat{Y}_j^2   -  \frac{1}{\tau^2} \frac{(T-t_0+1)}{2}.
		\end{array}
	\end{equation}

	Then for any $a_T \in (0,1)$ for some constant $C>0$ we have that 
	\begin{equation}
		\label{eqn:conclusion42}
		\mathbb{P}\left (   \vert \tilde{\epsilon}_{t_0,s}   \vert    \leq   \frac{ C \sqrt{ T }\log T}{ c_0 a_T},\,\,\,\,\,\forall s> t_0  \right)  \,\geq \, 1- a_T^2,
	\end{equation}
	
	and 
	\begin{equation}
		\label{eqn:conclusion52}
		\mathbb{P}\left (   \vert \tilde{\epsilon}_{s}   \vert    \leq   \frac{ C  \sqrt{ T }\log T }{c_0 \, a_T}  ,\,\,\,\,\,\,  \forall s>t_0 \right)  \,\geq \, 1- a_T^2,
	\end{equation}
	where $c_0$ is the smallest element of $I_2$. Furthermore, 
	\begin{equation}
		\label{eqn:conclusion62}
		\tilde{\epsilon}_{t_0}  =				\tilde{\epsilon}_{t_0,s}    + 		\tilde{\epsilon}_{s}.
	\end{equation}
	
\end{lemma}

\begin{proof}
	The claim follows with a very similar argument to that in the proof of Lemma \ref{lemma6}.
\end{proof}

\newpage

\section{Additional Simulation Studies}

\subsection{Comparison to the Bayesian method of \cite{fea06}}

\begin{table} \centering 
	\caption{\small{\textbf{Simulation study: comparison to \cite{fea06}.} Averages across $300$ repetitions for different sample sizes (T): bias $K-\widehat{K}$ (the lower, the better), Hausdorff statistics $d(\widehat{\mathcal{C}},\mathcal{C}^*)$ (the lower, the better), and time (in seconds)}. PRISCA is our proposal, the others are different runs of \cite{fea06} method (ExEff).} 
	\label{tab:bayes} 
	\scalebox{0.8}{\begin{tabular}{@{\extracolsep{5pt}} cc|ccc} 
			\\[-1.8ex]\hline 
			
			T & $a_0$ & $K-\widehat{K}$ & $d(\widehat{\mathcal{C}},\mathcal{C}^*)$  & Time  \\ 
			\hline \\[-1.8ex] 
			200 & PRISCA & 1.5 & 80.2 & 0.02  \\ 
			& ExEff-0.5 & 2.73 & 171.04 & 1.37  \\ 
			& ExEff-0.3-t & 2.24 & 135.8 & 1.18 \\ 
			& ExEff-0.3 & 2.25 & 136.1 & 1  \\ 
			\hline
			500 & PRISCA & 1.83 & 109.93 & 0.5  \\ 
			& ExEff-0.5 & 4.43 & 387.78 & 7.83  \\ 
			& ExEff-0.3-t & 3.42 & 239.87 & 7.12  \\ 
			& ExEff-0.3 & 3.49 & 249.28 & 7.74  \\ 
			\hline
			1000 & PRISCA & 2.6 & 204.18 & 4.72  \\ 
			& ExEff-0.5 & 6.28 & 769.97 & 30.02 \\ 
			& ExEff-0.3-t & 4.88 & 477.22 & 23.08  \\ 
			& ExEff-0.3 & 4.97 & 490.96 & 37.13  \\ 
	\end{tabular} }
\end{table}

We compare PRISCA to an additional Bayesian method, the one proposed by \cite{fea06} -- henceforth ExEff. The methodology outputs a posterior distribution over the change point locations (\text{e.g.}, \cite{fea06} Figure 1 (d)). Such posterior cannot be compared directly to that obtained by PRISCA because the two methods rely on different models. To extract point estimates and credible sets, one needs a post-processing steps that, to the best of our understanding, \cite{fea06} does not provide.

For this task, we employ a strategy recently used by \cite{cap21} in the context of changes in means: ``label" as change point estimates all the points that exceed a certain threshold, for example,  $0.5$ is the most natural one \citep{bar04} (ExEff-0.5). However, the choice of such a threshold is critical, as it is clear from our study here. We also include a run of the method where the threshold is equal to $0.3$ (ExEff-0.3). To construct the sets, we start from the detected change points and include the minimum number of time instances adjacent to them that makes the posterior probability of the set bigger or equal to $0.9$. We acknowledge the limitations and the degree of subjectivity in all these choices. That's why we did not include this comparison in the main manuscript.

There is a \texttt{Python} implementation of \cite{fea06} method  --available at \url{https://github.com/hildensia/bayesian_changepoint_detection}. We employ it here and compare it to our \texttt{R} implementation. ExEff also includes a truncation parameter that can be tuned to speed up the algorithm. Computational gains are obtained at the expense of moving away from exact sampling. We employ the truncation parameter suggested by Python package's authors but include a speed-up of ExEff where we modify the truncation parameter  (ExEff-0.3-t).

Table~\ref{tab:bayes} summarizes the result. The main takeaway is that PRISCA is orders of magnitude faster than ExEff (the post-processing step is not included in the runtime). This is particularly impressive given that PRISCA is implemented on \texttt{R}, while ExEff runs in \texttt{Python}, generally considered a much faster language. This is not \cite{fea06} implementation, but ExEff average runtimes align with what discussed in the paper for these sample sizes. In this specific example, PRISCA is more accurate than ExEff. Theoretically, there is no reason why PRISCA should be more accurate than ExEff. We attribute the lack of accuracy to the post-processing step. A different threshold leads to other performance (see ExEff-0.5 vis-a-vis ExEff-0.3). It is outside the scope of this study to select the best tuning parameter.




\bibliographystyle{agsm}
\begin{spacing}{1}
	\bibliography{biblio.bib}
\end{spacing}

\end{document}